\newtheorem{theorem}{Theorem}[section]
\newtheorem{lemma}[theorem]{Lemma}
\newtheorem{proposition}[theorem]{Proposition}
\newtheorem{remark}[theorem]{Remark}
\newtheorem{defi/prop}[theorem]{Definition/Proposition}
\newcommand{\N}{\mathbf{N}}
\newcommand{\C}{\mathbf{C}}
\renewcommand{\P}{\mathbf{P}}
\newcommand{\rE}{\mathrm{E}}
\newcommand{\A}{\mathrm{A}}
\newcommand{\B}{\mathrm{B}}
\renewcommand{\H}{\mathrm{H}}
\newcommand{\cL}{\mathcal{L}}
\renewcommand{\leq}{\leqslant}
\renewcommand{\geq}{\geqslant}
\newcommand{\st}{\  : \ }
\newcommand{\Sym}{\mathrm{Sym}}
\newcommand{\Id}{\openone}
\newcommand{\cU}{\mathcal{U}}
\newcommand{\cD}{\mathcal{D}}
\newcommand{\cN}{\mathcal{N}}
\newcommand{\cM}{\mathcal{M}}
\DeclareMathOperator{\tr}{Tr}
\DeclareMathOperator{\E}{\mathbf{E}}
\newcommand{\braket}[2]{\langle #1 | #2\rangle}
\newcommand{\ketbra}[2]{| #1 \rangle\!\langle #2 |}
\newcommand{\bra}[1]{\langle #1 |}
\newcommand{\ket}[1]{| #1 \rangle}
\newcommand{\proj}[1]{| #1 \rangle\!\langle #1 |}
\begin{document}

\title{Approximating quantum channels by completely positive maps with small Kraus rank}

\author{C\'{e}cilia Lancien}
\affiliation{Institut Fourier \& CNRS, Universit\'{e} Grenoble Alpes, 38610 Gi\`{e}res, France}
\email{cecilia.lancien@univ-grenoble-alpes.fr}
\author{Andreas Winter}
\email{andreas.winter@uab.cat}
\affiliation{Departament de F\'{\i}sica: Grup d'Informaci\'{o} Qu\`{a}ntica, Universitat Aut\`{o}noma de Barcelona, 08193 Bellaterra (Barcelona), Spain \& Instituci\'{o} Catalana de Recerca i Estudis Avan\c{c}ats, 08010 Barcelona, Spain}
%\date{March 23 2024}

\maketitle

\begin{abstract}
We study the problem of approximating a quantum channel by one with as few Kraus operators as possible (in the sense that, for any input state, the output states of the two channels should be close to one another). Our main result is that any quantum channel mapping states on some input Hilbert space $\A$ to states on some output Hilbert space $\B$ can be compressed into one with order $d\log d$ Kraus operators, where $d=\max(|\A|,|\B|)$, hence much less than $|\A||\B|$. In the case where the channel's outputs are all very mixed, this can be improved to order $d$. We discuss the optimality of this result as well as some consequences.
\end{abstract}

\section{Introduction}

Quantum channels are the most general framework in which the transformations that a quantum system may undergo are described. These are defined as completely positive and trace preserving (CPTP) maps from the set of bounded operators on some input Hilbert space $\A$ to the set of bounded operators on some output Hilbert space $\B$. Indeed, to be a physically valid evolution in the open system setting, a linear map $\cN$ has to preserve quantum states (i.e.~positive semi-definiteness and unit-trace conditions) even when tensorized with the identity map $\mathcal{I}$ on an auxiliary system.

Let us fix here once and for all some notation that we will use repeatedly in the remainder of the paper: Given a Hilbert space $\mathrm{H}$, we shall denote by $\cL(\mathrm{H})$ the set of linear operators on $\mathrm{H}$, and by $\cD(\mathrm{H})$ the set of density operators (i.e.~positive semi-definite and trace $1$ operators) on $\mathrm{H}$. Also, whenever $\mathrm{H}$ is finite dimensional (which will be the case of all the Hilbert spaces we will deal with in the sequel) we shall denote by $|\mathrm{H}|$ its dimension.

So assume from now on that the Hilbert spaces $\A$ and $\B$ are finite dimensional. Then, we know by Choi's representation theorem \cite{Choi} that a CPTP map $\mathcal{N}:\cL(\A)\rightarrow\cL(\B)$ can always be written as
\begin{equation} \label{eq:Kraus} \mathcal{N}: X\in\cL(\A) \mapsto \sum_{i=1}^s K_i X K_i^{\dagger} \in\cL(\B), \end{equation}
where the operators $K_i:\A\rightarrow\B$, $1\leq i\leq s$, are called the Kraus operators of $\mathcal{N}$ and satisfy the normalization relation $\sum_{i=1}^s K_i^{\dagger}K_i = \Id_{\A}$. The minimal $s\in\N$ such that $\mathcal{N}$ can be decomposed in the Kraus form \eqref{eq:Kraus} is called the Kraus rank of $\mathcal{N}$, which we shall denote by $r_K(\mathcal{N})$. By Stinespring's dilatation theorem \cite{Stine}, another alternative way of characterizing a CPTP map $\mathcal{N}:\cL(\A)\rightarrow\cL(\B)$ is as follows
\begin{equation} \label{eq:Stinespring} \mathcal{N}: X\in\cL(\A) \mapsto \tr_{\rE}\left(VXV^{\dagger}\right) \in\cL(\B), \end{equation}
for some environment Hilbert space $\rE$ and some isometry $V:\A\hookrightarrow\B\otimes\rE$ (i.e.~$V^{\dagger}V=\Id_{\A}$). In such picture, $r_K(\mathcal{N})$ is then nothing else than the minimal environment dimension $|\rE|\in\N$ such that $\mathcal{N}$ may be expressed in the Stinespring form \eqref{eq:Stinespring}. It may be worth pointing out that there is a lot of freedom in representation \eqref{eq:Kraus}: two sets of Kraus operators $\{K_i,\ 1\leq i\leq s\}$ and $\{L_i,\ 1\leq i\leq s\}$ give rise to the same quantum channel as soon as there exists a unitary $U$ on $\C^s$ such that, for all $1\leq i\leq s$, $L_i=\sum_{j=1}^sU_{ij}K_j$. On the contrary, representation \eqref{eq:Stinespring} is essentially unique, up to the (usually irrelevant) transformation $V\mapsto(\Id\otimes U) V$, for $U$ a unitary on $\C^s$. That is why we will often prefer working with the latter than with the former.

Yet another way of viewing the Kraus rank of a CPTP map $\mathcal{N}:\cL(\A)\rightarrow\cL(\B)$ is as the rank of its associated Choi-Jamio\l{}kowski state. Denoting by $\psi$ the maximally entangled state on $\A\otimes\A$, the latter is defined as the state $\tau(\cN) = \mathcal{I}\otimes\cN(\psi)$ on $\A\otimes\B$.
%$\ket{\psi}=\sum_{i=1}^{|\A|}\ket{i}\otimes\ket{i}/\sqrt{|\A|}$
%\[ \tau(\cN) = \mathcal{I}\otimes\cN(\Psi) = \frac{1}{|\A|} \sum_{i,j=1}^{|\A|}\ketbra{i}{j}\otimes\cN(\ketbra{i}{j}). \]
Consequently, it holds that any quantum channel from $\A$ to $\B$ has Kraus rank at most $|\A||\B|$. And the extremal such quantum channels are those with Kraus rank less than $|\A|$. In particular, the case $r_K(\cN)=1$ corresponds to $\cN$ being a unitary, hence reversible, evolution, whereas whenever $r_K(\cN)>1$, one can view $\cN$ as a noisy summary of a unitary evolution on a larger system. The Kraus rank of a quantum channel can thus legitimately be seen as a measure of its ``complexity'': it quantifies the minimal amount of ancillary resources needed to implement it (or equivalently the amount of degrees of freedom in it that one is ignorant of). A natural question in this context would therefore be: given any quantum channel, is it possible to reduce its complexity while not affecting too much its action, or in other words to find a channel with much smaller Kraus rank which approximately simulates it?

One last definition we shall need concerning CP maps is the following: the conjugate (or dual) of a CP map $\cN:\cL(\A)\rightarrow\cL(\B)$ is the CP map $\cN^*:\cL(\B)\rightarrow\cL(\A)$ defined by
\[ \forall\ X\in\cL(\A),\ \forall\ Y\in\cL(\B),\ \tr(\cN(X)Y) = \tr(X\cN^*(Y)). \]
It is characterized as well by saying that $\{K_i,\ 1\leq i\leq s\}$ is a set of Kraus operators for $\cN$ if and only if $\{L_i=K_i^{\dagger},\ 1\leq i\leq s\}$ is a set of Kraus operators for $\cN^*$. Hence obviously, $\cN$ and $\cN^*$ have same Kraus rank, while the trace-preservingness condition $\sum_{i=1}^sK_i^{\dagger}K_i=\Id$ for $\cN$ is equivalent to the unitality condition $\sum_{i=1}^sL_iL_i^{\dagger}=\Id$ for $\cN^*$.

The remainder of this paper is organized as follows. In Section \ref{sec:background} we gather all needed background on quantum channel approximation that we are interested in. This includes precise definitions, previous works in this direction, etc. Our main results are then stated and commented in Section \ref{sec:results}, while their proofs are relegated to Section \ref{sec:proofs}. In Section \ref{sec:applications} we present several corollaries, which have applications in quantum data hiding and locking, amongst other. We finally discuss some open questions in Section \ref{sec:discussion}.

\section{Quantum channel approximation: definitions and already known facts}
\label{sec:background}

Before going any further, we need to specify what we mean by ``approximating a quantum channel'', since indeed, several definitions of approximation may be considered. In our setting, the most natural one is probably that of approximation in $(1{\rightarrow}1)$-norm: given CPTP maps $\mathcal{N},\widehat{\mathcal{N}}:\cL(\A)\rightarrow\cL(\B)$, we will say that $\widehat{\cN}$ is an $\varepsilon$-approximation of $\cN$ in $(1{\rightarrow}1)$-norm, where $\varepsilon>0$ is some fixed parameter, if
\begin{equation} \label{eq:def_1->1} \forall\ \varrho\in\cD(\A),\ \left\|\widehat{\cN}(\varrho)-\cN(\varrho)\right\|_1 \leq \varepsilon. \end{equation}
At first sight it might appear that an even more natural error quantification in such a context would be in terms of the completely-bounded $(1{\rightarrow}1)$-norm (aka diamond norm) \cite{diamond}. That is, in order to call $\widehat{\cN}$ an $\varepsilon$-approximation of $\cN$, we would require that, for any Hilbert space $\A'$,
\begin{equation} \label{eq:def_cb-1->1} \forall\ \varrho\in\cD(\A\otimes\A'),\ \left\|\widehat{\cN}\otimes\mathcal{I}(\varrho)-\cN\otimes\mathcal{I}(\varrho)\right\|_1 \leq \varepsilon. \end{equation}
Nevertheless, this notion of approximation is too strong for our purposes. Indeed, if $\mathcal{N}$ and $\widehat{\mathcal{N}}$ satisfy Equation \eqref{eq:def_cb-1->1}, it implies in particular that their associated Choi-Jamio\l{}kowski states have to be $\varepsilon$-close in trace-norm. And this, in general, is possible only if $\mathcal{N}$ and $\widehat{\mathcal{N}}$ have the same, or at least comparable, number of Kraus operators, so that no environment dimensionality reduction can be achieved.

The question of quantum channel compression has already been studied in one specific case, which is the one of the fully randomizing (or depolarizing) channel. Let us recall what is known there. The fully randomizing channel $\mathcal{R}:\cL(\A)\rightarrow\cL(\A)$ is the CPTP map with same input and output spaces defined by
\[ \mathcal{R}:X\in\cL(\A)\mapsto (\tr X)\frac{\Id}{|\A|} \in\cL(\A), \]
so that, in particular, all input states $\varrho\in\cD(\A)$ are sent to the maximally mixed state $\Id/|\A|\in\cD(\A)$. $\mathcal{R}$ has maximal Kraus rank $|\A|^2$ (because $\tau(\mathcal{R})$ is simply $\Id/|\A|^2$, and hence has rank $|\A|^2$). This was of course to be expected, if adhering to the intuitive idea that the bigger is the Kraus rank of channel, the noisier is the channel. One possible minimal Kraus decomposition for $\mathcal{R}$ is
\[ \mathcal{R}:X\in\cL(\A)\mapsto  \frac{1}{|\A|^2}\sum_{j,k=1}^{|\A|} V_{jk}X V_{jk}^{\dagger} \in\cL(\A), \]
where for each $1\leq j,k\leq |\A|$, $V_{jk}=X^j Z^k$ with $X$ and $Z$ the generalized 
Pauli shift and phase operators on $\A$.
It was initially established in \cite{HLSW} and later improved in \cite{Aubrun} that there exist almost randomizing channels with drastically smaller Kraus rank. More specifically, the following was proved: for any $0<\varepsilon<1$, the CPTP map $\mathcal{R}$ can be $\varepsilon$-approximated in $(1{\rightarrow}1)$-norm by a CPTP map $\widehat{\mathcal{R}}$ with Kraus rank at most $C|\A|/\varepsilon^2$, where $C>0$ is a constant that is independent of $|\A|$ and $\varepsilon$. Actually, something stronger was established, namely 
\[ \forall\ \varrho\in\cD(\A),\ \left\|\widehat{\mathcal{R}}(\varrho)-\mathcal{R}(\varrho)\right\|_{\infty} \leq \frac{\varepsilon}{|\A|}, \]
which obviously implies that, for any $1\leq p\leq\infty$, $\widehat{\mathcal{R}}$ is an $\varepsilon$-approximation of $\mathcal{R}$ in $(1{\rightarrow}p)$-norm, in the sense that
\begin{equation} \label{eq:def_1->p} \forall\ \varrho\in\cD(\A),\ \left\|\widehat{\mathcal{R}}(\varrho)-\mathcal{R}(\varrho)\right\|_p \leq \frac{\varepsilon}{|\A|^{1-1/p}}. \end{equation}

The question we investigate here is whether such kind of statement actually holds true for any channel. Note however that, for a channel which is not the fully randomizing one, imposing an approximation in Schatten-$p$-norm up to an error $\varepsilon/|\B|^{1-1/p}$, as the error appearing in Equation \eqref{eq:def_1->p}, is potentially too strong. Indeed, the fully randomizing channel is such that all outputs are the maximally mixed state, and thus have $p$-norm equal to $1/|\B|^{1-1/p}$. But other channels might have outputs which are much less mixed, and thus have much larger $p$-norm. It would therefore seem more accurate to quantify closeness in terms of relative error. Hence, given a CPTP map $\mathcal{N}:\cL(\A)\rightarrow\cL(\B)$, we would be interested in finding a CPTP map $\widehat{\mathcal{N}}:\cL(\A)\rightarrow\cL(\B)$ with Kraus rank as small as possible, and such that
\begin{equation} \label{eq:approx-p-relative} \forall\ \varrho\in\cD(\A),\ \left\|\widehat{\cN}(\varrho)-\cN(\varrho)\right\|_p \leq \varepsilon\left\|\cN(\varrho)\right\|_p. \end{equation}

\section{Statement of the main results}
\label{sec:results}

Our strategy in order to prove an approximation result of the form given by Equation \eqref{eq:def_1->1} will be to first establish approximation in terms of operator ordering of the outputs for all inputs. Concretely, given a CPTP map $\mathcal{N}:\cL(\A)\rightarrow\cL(\B)$, the idea will be to look for a CPTP map $\widehat{\mathcal{N}}:\cL(\A)\rightarrow\cL(\B)$ with Kraus rank as small as possible, and such that
\begin{equation} \label{eq:approxmain-ideal} \forall\ \varrho\in\cD(\A),\ (1-\varepsilon)\cN(\varrho) \leq \widehat{\cN}(\varrho)\leq (1+\varepsilon)\cN(\varrho). \end{equation}
The approximation statement that we establish in Theorem \ref{th:main} is not exactly that of Equation \eqref{eq:approxmain-ideal}, but very close to it.

\begin{theorem} \label{th:main}
Fix $0<\varepsilon<1$ and let $\cN:\cL(\A)\rightarrow\cL(\B)$ be a CPTP map with Kraus rank $|\rE|\geq |\A|,|\B|$. Then, there exists a CP map $\widehat{\cN}:\cL(\A)\rightarrow\cL(\B)$ with Kraus rank at most $C\max(|\A|,|\B|)\log(|\rE|/\varepsilon)/\varepsilon^2$ (where $C>0$ is a universal constant) and such that
\begin{equation} \label{eq:approxmain} \forall\ \varrho\in\cD(\A),\ -\varepsilon\left(\cN(\varrho) -\frac{\Id}{|\B|}\right) \leq \widehat{\cN}(\varrho)-\cN(\varrho)\leq \varepsilon\left(\cN(\varrho) +\frac{\Id}{|\B|}\right). \end{equation}
\end{theorem}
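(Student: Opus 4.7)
My plan is to construct $\widehat{\cN}$ by randomly compressing the Stinespring dilation of $\cN$ from environment dimension $|\rE|$ down to $s := \lceil C\max(|\A|,|\B|)\log(|\rE|/\varepsilon)/\varepsilon^2\rceil$, and then to verify \eqref{eq:approxmain} via matrix concentration together with a discretisation argument. Fixing a Stinespring isometry $V:\A\hookrightarrow\B\otimes\rE$ for $\cN$ and writing $V=\sum_{i=1}^{|\rE|}V_i\otimes|i\rangle$, I would draw i.i.d.\ complex Gaussian vectors $g_1,\dots,g_s\in\C^{|\rE|}$ with covariance $\Id_\rE/s$, define $\widehat V_j:=(\Id_\B\otimes\langle g_j|)V\in\cL(\A,\B)$, and set $\widehat{\cN}(\varrho):=\sum_{j=1}^s\widehat V_j\varrho\widehat V_j^\dagger$. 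This map is automatically CP with at most $s$ Kraus operators, and $\E[g_jg_j^\dagger]=\Id_\rE/s$ gives $\E\widehat{\cN}=\cN$. If the precise asymmetric lower bound in \eqref{eq:approxmain} requires a strict positive floor on $\widehat{\cN}(\varrho)$, I would additionally mix in an $O(\varepsilon)$ fraction of an almost-randomizing channel from Aubrun--HLSW, which only inflates the Kraus rank by a lower-order term.

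Next I would reduce the operator inequality to a scalar one: by linearity in $\varrho$ and convexity, \eqref{eq:approxmain} is equivalent to requiring, for every pair of unit vectors $|\varphi\rangle\in\A$ and $|w\rangle\in\B$,
\[
\bigl|\langle w|\bigl(\widehat{\cN}(\proj\varphi)-\cN(\proj\varphi)\bigr)|w\rangle\bigr| \;\leq\; \varepsilon\bigl(\langle w|\cN(\proj\varphi)|w\rangle+1/|\B|\bigr).
\]
Setting $\psi_i:=\langle\varphi|V_i^\dagger|w\rangle$ and $a:=\|\psi\|^2=\langle w|\cN(\proj\varphi)|w\rangle$, the variables $X_j:=\langle w|\widehat V_j\proj\varphi\widehat V_j^\dagger|w\rangle=|\langle g_j|\psi\rangle|^2$ are i.i.d.\ and distributed as $(a/s)$ times a standard exponential, so Bernstein for centred exponentials yields
\[
\Pr\Bigl[\big|\textstyle\sum_j(X_j-\E X_j)\big|\geq \varepsilon(a+1/|\B|)\Bigr]\;\leq\; 2\exp(-c\,s\,\varepsilon^2),
\]
uniformly in $a\geq 0$.

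To make the bound hold simultaneously for all $(\varphi,w)$, I would take $\eta$-nets of the unit spheres of $\A$ and $\B$, whose cardinalities are at most $(3/\eta)^{2|\A|}$ and $(3/\eta)^{2|\B|}$, and union-bound. A separate high-probability Lipschitz control on $(\varphi,w)\mapsto\langle w|(\widehat{\cN}-\cN)(\proj\varphi)|w\rangle$---its constant is dominated by $\sum_j\|g_j\|^2$, which concentrates at $|\rE|$ since $\|\widehat V_j\|^2\leq\|g_j\|^2$---lets me choose $\eta\asymp\varepsilon/(|\B||\rE|)$ so that the discretisation error stays below the floor $\varepsilon/|\B|$; as $|\B|\leq|\rE|$, this produces only a $\log(|\rE|/\varepsilon)$ factor. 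The union bound then succeeds provided $s\geq C(|\A|+|\B|)\log(|\rE|/\varepsilon)/\varepsilon^2$, giving the target Kraus rank.

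The main technical difficulty is precisely obtaining the scaling $\max(|\A|,|\B|)$ rather than $|\A||\B|$: it rests on the additive term $\varepsilon\,\Id/|\B|$ on the right-hand side of \eqref{eq:approxmain} being exactly calibrated to the fluctuation scale of the $X_j$'s, so that the single exponent $\exp(-c\,s\,\varepsilon^2)$ controls concentration in both regimes $a\gtrsim 1/|\B|$ (where $\varepsilon a$ binds) and $a\lesssim 1/|\B|$ (where $\varepsilon/|\B|$ binds). Without this baseline, the small-$a$ regime would degrade Bernstein and force $s$ to scale with $|\B|^2$.
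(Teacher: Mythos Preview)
Your proposal is essentially the same as the paper's proof: random compression of the Stinespring environment, a Bernstein-type sub-exponential concentration for each fixed pair of unit vectors $(x,y)$, and a net argument at resolution $\asymp\varepsilon/(|\B||\rE|)$ whose union bound produces the $\log(|\rE|/\varepsilon)$ factor. The only cosmetic difference is that the paper samples $\varphi_1,\ldots,\varphi_n$ uniformly from the unit sphere of $\rE$ (and controls the $\psi_1$-norm of $\tr[y\otimes\varphi\,V x V^\dagger]$ via symmetric-subspace moment bounds), whereas you use i.i.d.\ Gaussians, which makes the exponential law of $X_j$ exact rather than approximate; the resulting concentration estimate $\exp(-cs\varepsilon^2)$ is the same.

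One remark: the paper in fact only establishes the \emph{symmetric} operator inequality $|\widehat{\cN}(\varrho)-\cN(\varrho)|\leq\varepsilon(\cN(\varrho)+\Id/|\B|)$ (see its equation (4.6)), so the sign in the stated lower bound of \eqref{eq:approxmain} appears to be a typo. Your contingency of mixing in an $O(\varepsilon)$ fraction of an almost-randomising channel would indeed yield the stronger floor $\widehat{\cN}(\varrho)\geq(1-\varepsilon)\cN(\varrho)+\varepsilon\Id/|\B|$, but this step is absent from the paper and is not needed for any of the downstream consequences it draws.
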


\begin{remark} \label{remark:schatten}
Note that if $\widehat{\cN}$ satisfies Equation \eqref{eq:approxmain}, then it especially implies that it approximates $\cN$ in any Schatten-norm in a sense close to that of Equation \eqref{eq:approx-p-relative}, namely
\[ \forall\ p\in\N,\ \forall\ \varrho\in\cD(\A),\ \left\|\widehat{\cN}(\varrho)-\cN(\varrho)\right\|_p \leq \varepsilon\left(\left\|\cN(\varrho)\right\|_p+\frac{1}{|\B|^{1-1/p}}\right). \]
In particular, we have the $(1{\rightarrow}1)$-norm approximation of $\cN$ by $\widehat{\cN}$
\[ \forall\ \varrho\in\cD(\A),\ \left\|\widehat{\cN}(\varrho)-\cN(\varrho)\right\|_1 \leq 2\varepsilon,  \]
in which we can further impose that $\widehat{\cN}$ is strictly, and not just up to an error $2\varepsilon$, trace preserving (see the proof of Theorem \ref{th:main}).
\end{remark}

The main interest of the approximation statements in Theorem \ref{th:main} and Remark \ref{remark:schatten} is their universality. They indeed establish that any quantum channel can be approximated by one that has Kraus rank of order $\max(|\A|,|\B|)\log(|\rE|)$. They of course do not exclude the existence of channels for which a better compression would be possible, but provide an achievable compression in the worst-case scenario.

One important question at this point is the one of optimality in the above results. A first obvious observation to make in order to answer it is the following: if a CP map has Kraus rank $s$, then it necessarily sends rank $1$ inputs to output states of rank at most $s$. This is of course informative only if $s$ is smaller than the output space dimension. But as we shall see, having this in mind will be useful to prove that certain channels cannot be compressed further than as guaranteed by Theorem \ref{th:main}.

Our constructions will be based on the existence of so-called tight normalized frames. Namely, for any $N,d\in\N$ with $N\geq d$, there exist unit vectors $\ket{\psi_1},\ldots,\ket{\psi_N}$ in $\C^d$ such that
\[ \frac{1}{N}\sum_{k=1}^N\ketbra{\psi_k}{\psi_k}=\frac{\Id}{d}. \]
Denoting by $\{\ket{j},\ 1\leq j\leq d\}$ an orthonormal basis of $\C^d$, a possible way of constructing such vectors is e.g.~to make the choice
\begin{equation} \label{eq:frame'} \forall\ 1\leq k\leq N,\ \ket{\psi_k}=\frac{1}{\sqrt{d}}\sum_{j=1}^d e^{2i\pi jk/N}\ket{j}. \end{equation}
Note that if this so, then any basis vector $\ket{j}$, $1\leq j\leq d$, is such that, for each $1\leq k\leq N$, $|\braket{\psi_k}{j}|^2=1/d$.
%or even more simply, in the particular case where $N=md$ for some $m\in\N$,
%\[ \forall\ 1\leq j\leq d,\ \forall\ 0\leq q\leq m-1,\ \ket{\psi_{qd+j}}=\ket{j}. \]
%Note that if $\ket{\psi_1},\ldots,\ket{\psi_N}$ satisfy the tight normalized frame condition \eqref{eq:frame}, then for any unit vector $\ket{x}$ in $\C^d$, the average over $1\leq k\leq N$ of the $|\braket{\psi_k}{x}|^2$ is equal to $1/d$, so that it is always possible to find a unit vector $\ket{x_0}$ in $\C^d$ such that, for each $1\leq k\leq N$, $|\braket{\psi_k}{x_0}|^2$ is equal to $1/d$.

Let us now come back to our objective. What we want to exhibit here are CPTP maps $\cN:\cL(\A)\rightarrow\cL(\B)$ with either one or the other of the following two properties: if a CP map $\widehat{\cN}:\cL(\A)\rightarrow\cL(\B)$ satisfies
\begin{equation} \label{eq:approx1} \forall\ R\in\mathcal{H}_+(\A),\
(1-\varepsilon)\cN(R)-\varepsilon(\tr R)\frac{\Id}{|\B|} \leq \widehat{\cN}(R) \leq (1+\varepsilon)\cN(R)+\varepsilon(\tr R)\frac{\Id}{|\B|}, \end{equation}
then it necessarily has to be such that either $r_K(\widehat{\cN})\geq |\A|$ or $r_K(\widehat{\cN})\geq |\B|$. Besides, note that the CP maps $\cN,\widehat{\cN}$ fulfilling condition \eqref{eq:approx1} above is equivalent to the conjugate CP maps $\cN^*,\widehat{\cN}^*$ fulfilling condition \eqref{eq:approx2} below
\begin{equation} \label{eq:approx2} \forall\ R\in\mathcal{H}_+(\B),\
(1-\varepsilon)\cN^*(R)-\varepsilon(\tr R)\frac{\Id}{|\B|} \leq \widehat{\cN}^*(R) \leq (1+\varepsilon)\cN^*(R)+\varepsilon(\tr R)\frac{\Id}{|\B|}. \end{equation}
Indeed, we have the following chain of equivalences
\begin{align*}
& \forall\ R\in\mathcal{H}_+(\A),\ \widehat{\cN}(R) \leq (1+\varepsilon)\cN(R)+\varepsilon(\tr R)\frac{\Id}{|\B|} \\ 
\Leftrightarrow\ & \forall\ R\in\mathcal{H}_+(\A),S\in\mathcal{H}_+(\B),\ \tr\left(\widehat{\cN}(R)S\right) \leq (1+\varepsilon)\tr\left(\cN(R)S\right)+\varepsilon(\tr R)\tr\left(\frac{\Id}{|\B|}S\right) \\
\Leftrightarrow\ & \forall\ R\in\mathcal{H}_+(\A),S\in\mathcal{H}_+(\B),\ \tr\left(R\widehat{\cN}^*(S)\right) \leq (1+\varepsilon)\tr\left(R\cN^*(S)\right)+\varepsilon(\tr S)\tr\left(\frac{\Id}{|\B|}R\right) \\
\Leftrightarrow\ & \forall\ S\in\mathcal{H}_+(\B),\ \widehat{\cN}^*(S) \leq (1+\varepsilon)\cN^*(S)+\varepsilon(\tr S)\frac{\Id}{|\B|},
\end{align*}
where the first and third equivalences are by the characterization of positive semidefinite matrices, and the second equivalence is by definition of conjugate maps. And similarly for the other inequality.
Depending on what we want to establish, it will be more convenient to work with either requirement \eqref{eq:approx1} or requirement \eqref{eq:approx2} 

%On the one hand, consider $\cN$ a fully forgetful channel, i.e.~a channel which sends all input states $\rho\in\cD(\A)$ to the same output state $\sigma_*\in\cD(\B)$ (a particular instance is that of the fully randomizing channel $\mathcal{R}$). If all eigenvalues of $\sigma_*$ are lower bounded by $\lambda/|\B|$ for some $\lambda>0$, then for $\varepsilon$ small enough, at least $|\B|$ Kraus operators are needed to $\varepsilon$-approximate $\cN$.

Assume first of all that $|\B|\geq |\A|$, and consider $\cM:\cL(\A)\rightarrow\cL(\B)$ a so-called quantum-classical channel (aka measurement). More specifically, define the CPTP map
\begin{equation} \label{eq:qc} \mathcal{M}:X\in\cL(\A)\mapsto\frac{|\A|}{|\B|}\sum_{i=1}^{|\B|}\bra{\psi_i}X\ket{\psi_i}\ketbra{x_i}{x_i} \in\cL(\B), \end{equation}
where $\{\ket{x_i},\ 1\leq i\leq|\B|\}$ is an orthonormal basis of $\B$ and $\ket{\psi_1},\ldots,\ket{\psi_{|\B|}}$ are unit vectors of $\A$, defined in terms of an orthonormal basis $\{\ket{j},\ 1\leq j\leq|\A|\}$ of $\A$ as by Equation \eqref{eq:frame'}. Note that this tight normalized frame assumption implies that $\left\{(|\A|/|\B|)\ketbra{\psi_i}{\psi_i}\right\}_{1\leq i\leq|\B|}$ forms a rank-$1$ POVM on $\A$ (hence a posteriori the justification of the denomination for $\cM$). Setting, for each $1\leq i\leq|\B|$, $K_i=\sqrt{|\A|/|\B|}\,\ketbra{x_i}{\psi_i}$, we can clearly re-write $\mathcal{M}:X\in\cL(\A)\mapsto \sum_{i=1}^{|\B|}K_iXK_i^{\dagger} \in\cL(\B)$, so $r_K(\cM)\leq |\B|$. And what we actually want to show is that it is even impossible to approximate $\cM$ in the sense of Theorem \ref{th:main} with strictly less than $|\B|$ Kraus operators. Observe that by construction, say, $\ket{1}$ is such that, for each $1\leq i\leq |\B|$, $|\braket{\psi_i}{1}|^2=1/|\A|$, so that $\cM(\ketbra{1}{1})= \Id/|\B|$. Yet, assume that $\widehat{\cM}:\cL(\A)\rightarrow\cL(\B)$ is a CPTP map such that $\cM,\widehat{\cM}$ fulfill Equation \eqref{eq:approx1} for some $0<\varepsilon<1/2$. Then, the l.h.s.~of Equation \eqref{eq:approx1} yields in particular, $\widehat{\cM}(\ketbra{1}{1})\geq (1-2\varepsilon)\,\Id/|\B|$, so that $\widehat{\cM}(\ketbra{1}{1})$ has to have full rank. And therefore, it cannot be that $r_K(\widehat{\mathcal{M}})<|\B|$.

Assume now that $|\A|\geq |\B|$, and consider $\cN:\cL(\A)\rightarrow\cL(\B)$ a so-called classical-quantum channel. More specifically, define the CPTP map
\begin{equation} \label{eq:cq} \mathcal{N}:X\in\cL(\A)\mapsto\sum_{i=1}^{|\A|}\bra{x_i}X\ket{x_i}\ketbra{\psi_i}{\psi_i} \in\cL(\B), \end{equation}
where $\{\ket{x_i},\ 1\leq i\leq|\A|\}$ is an orthonormal basis of $\A$ and $\ket{\psi_1},\ldots,\ket{\psi_{|\A|}}$ are unit vectors in $\B$. Setting, for each $1\leq i\leq|\A|$, $K_i=\ketbra{\psi_i}{x_i}$, we can clearly re-write $\mathcal{N}:X\in\cL(\A)\mapsto \sum_{i=1}^{|\A|}K_iXK_i^{\dagger} \in\cL(\B)$, so $r_K(\cN)\leq |\A|$. Now, we want to show that, at least for certain choices of $\ket{\psi_1},\ldots,\ket{\psi_{|\A|}}$, it is even impossible to approximate $\cN$ in the sense of Theorem \ref{th:main} with strictly less than $|\A|$ Kraus operators. For that, we impose that they are defined in terms of an orthonormal basis $\{\ket{j},\ 1\leq j\leq|\B|\}$ of $\B$ as by Equation \eqref{eq:frame'}. Since the conjugate of $\cN$ is the CP unital map
\[ \mathcal{N}^*:X\in\cL(\B)\mapsto\sum_{i=1}^{|\A|}\bra{\psi_i}X\ket{\psi_i}\ketbra{x_i}{x_i} \in\cL(\A), \]
we have in this case that $\mathcal{M}=(|\B|/|\A|)\cN^*$ is precisely of the form \eqref{eq:qc} (with the roles of $\A$ and $\B$ switched). Hence, as we already showed, if $\widehat{\cM}:\cL(\B)\rightarrow\cL(\A)$ is a CPTP map such that $\cM,\widehat{\cM}$ fulfil Equation \eqref{eq:approx1} (with the roles of $\A$ and $\B$ switched) for some $0<\varepsilon<1/2$, then it cannot be that $r_K(\widehat{\mathcal{M}})<|\A|$. This means equivalently that if $\widehat{\cN}^*:\cL(\B)\rightarrow\cL(\A)$ is a CP map such that $\cN^*,\widehat{\cN}^*$ fulfil Equation \eqref{eq:approx2} for some $0<\varepsilon<1/2$, then it cannot be that $r_K(\widehat{\mathcal{N}}) = r_K(\widehat{\mathcal{N}}^*)<|\A|$.

%to approximate a so-called \textit{geometry preserving channel}, i.e.~a channel which preserves the distance between all input pure states, at least $\sqrt{d}$ Kraus operators are needed. These two types of channels, which are complementary to another, are described in greater depth in \cite{HW}.

Summarizing, we just established that $n\geq \max (|\A|,|\B|)$ is for sure necessary in Theorem \ref{th:main}. But it is not clear whether or not the $\log |\rE|$ factor can be removed. In the case of ``well-behaved'' channels, whose range is only composed of sufficiently mixed states, we can answer affirmatively, as an immediate implication of Theorem \ref{th:main'} below. However, we leave the question open in general.

\begin{theorem} \label{th:main'}
Fix $0<\varepsilon<1$ and let $\cN:\cL(\A)\rightarrow\cL(\B)$ be a CPTP map with Kraus rank $|\rE|\geq |\A|,|\B|$. Then, there exists a CP map $\widehat{\cN}:\cL(\A)\rightarrow\cL(\B)$ with Kraus rank at most $C\max(|\A|,|\B|)/\varepsilon^2$ (where $C>0$ is a universal constant) and such that
\[ \label{eq:approx'} \sup_{\varrho\in\cD(\A)} \left\| \widehat{\cN}(\varrho)-\cN(\varrho) \right\|_{\infty} \leq \varepsilon\sup_{\varrho\in\cD(\A)} \left\|\cN(\varrho)\right\|_{\infty}. \]
\end{theorem}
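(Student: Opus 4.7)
The proof would follow the random-sampling philosophy of Theorem~\ref{th:main}, trading the $(1{\rightarrow}1)$-norm estimate for an operator-norm one and sharpening the concentration step so as to remove the $\log|\rE|$ factor. Fix a Stinespring isometry $V:\A\hookrightarrow\B\otimes\rE$ for $\cN$ and draw $\ket{f_1},\ldots,\ket{f_s}$ i.i.d.\ Haar-uniform in the unit sphere of $\rE$, with $s:=\lceil C\max(|\A|,|\B|)/\varepsilon^2\rceil$. Define the random CP map
$$ \widehat{\cN}(\varrho)\;:=\;\frac{|\rE|}{s}\sum_{k=1}^s(\Id_\B\otimes\bra{f_k})\,V\varrho V^\dagger\,(\Id_\B\otimes\ket{f_k}), $$
which has Kraus rank at most $s$ and satisfies $\E\widehat{\cN}=\cN$; exact trace preservation is restored a posteriori by the rescaling trick of Remark~\ref{remark:schatten}.

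Let $\lambda:=\sup_{\varrho\in\cD(\A)}\|\cN(\varrho)\|_\infty$. Since $\varrho\mapsto\|(\widehat\cN-\cN)(\varrho)\|_\infty$ is convex, it is enough to estimate the error on pure inputs $\varrho=\proj{\varphi}$. Setting $\ket{\Psi_{\beta,\varphi}}:=(\bra{\beta}\otimes\Id_\rE)V\ket{\varphi}\in\rE$ and $M:=(|\rE|/s)\sum_k\ketbra{f_k}{f_k}-\Id_\rE$, one checks that
$$ \bigl\|(\widehat\cN-\cN)(\proj{\varphi})\bigr\|_\infty \;=\; \sup_{\ket{\beta}}\bigl|\bra{\Psi_{\beta,\varphi}}M\ket{\Psi_{\beta,\varphi}}\bigr|, $$
with the crucial a priori bound $\|\Psi_{\beta,\varphi}\|^2=\bra{\beta}\cN(\proj{\varphi})\ket{\beta}\leq\lambda$. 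The whole question therefore reduces to showing that the random quadratic form $\ket{\Psi}\mapsto\bra{\Psi}M\ket{\Psi}$ is bounded by $\varepsilon\lambda$ uniformly over the two-parameter family $T:=\{\ket{\Psi_{\beta,\varphi}}:\|\varphi\|=\|\beta\|=1\}$. Although $T$ sits inside the sphere of $\rE$, it is parametrised by unit vectors in $\A$ and $\B$, so its metric entropy at scale $u$ is only of order $(|\A|+|\B|)\log(1/u)$, much smaller than that of the full sphere of dimension $|\rE|$.

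For a single fixed $\ket{\Psi}$, truncating the i.i.d.\ summands $(|\rE|/s)|\braket{f_k}{\Psi}|^2$ at the typical Haar-overlap level $\|\Psi\|^2/|\rE|$ and applying a scalar Bernstein-type bound yields $\Pr[\,|\bra{\Psi}M\ket{\Psi}|>\varepsilon\lambda\,]\leq 2\exp(-cs\varepsilon^2)$. The main obstacle is the \emph{uniform} version of this estimate over $T$: a blunt $\varepsilon$-net argument would pick up an extra $(|\A|+|\B|)$ factor and only produce $s\sim\max(|\A|,|\B|)^2/\varepsilon^2$. To recover the claimed linear dependence one needs a chaining argument -- a Dudley-entropy or generic-chaining bound in the vein of Aubrun's refinement of HLSW for the fully randomising channel -- adapted to the product-of-spheres parametrisation of $T$. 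Once this chaining step is in place, the probabilistic estimate survives and guarantees the existence of a realisation of $\widehat{\cN}$ with Kraus rank at most $C\max(|\A|,|\B|)/\varepsilon^2$ satisfying the stated inequality.
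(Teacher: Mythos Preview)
Your random construction and the reduction to controlling $\sup_{\Psi\in T}|\bra{\Psi}M\ket{\Psi}|$ are exactly what the paper does. The discrepancy is in the uniform step: you assert that a plain net argument would cost an extra factor of $|\A|+|\B|$ and that chaining is therefore required, but this is not so --- the paper obtains the result with an elementary net argument and no chaining whatsoever.

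The point you are missing is a self-bounding trick that lets one use a net at a \emph{fixed} scale (say $\alpha=\beta=1/5$ on $S_\A\times S_\B$), independent of $\varepsilon$, $|\A|$, $|\B|$, $|\rE|$. Writing $\mathcal{E}=\widehat{\cN}-\cN$, the perturbation incurred in passing from $(x,y)$ to a nearby net point $(\tilde x,\tilde y)$ is bounded not by the crude $|\rE|$ (which is what forces the fine net in Proposition~\ref{prop:forall}) but by $2(\alpha+\beta)\sup_{x,y}|\bra{y}\mathcal{E}(x)\ket{y}|$ plus $2(\alpha+\beta)\varepsilon\sup_{x,y}\bra{y}\cN(x)\ket{y}$. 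Since the first term involves the \emph{same} supremum one is trying to bound, it can be moved to the left-hand side. One gets
\[
{\sup}_{x,y}\left|\bra{y}\mathcal{E}(x)\ket{y}\right|\;\leq\;\varepsilon\,\frac{1+2(\alpha+\beta)}{1-2(\alpha+\beta)}\,{\sup}_{x,y}\bra{y}\cN(x)\ket{y},
\]
and the net size is only $15^{2(|\A|+|\B|)}$, so the union bound against $e^{-cn\varepsilon^2}$ yields $n$ of order $\max(|\A|,|\B|)/\varepsilon^2$ directly. This is precisely Proposition~\ref{prop:forall'} in the paper, and its proof is a few lines. No Dudley integral or generic chaining is needed; the coarse-net-plus-self-bounding argument is strictly simpler than what you propose and already gives the sharp linear dependence.
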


As a straightforward consequence of Theorem \ref{th:main'} we get the following: If $\cN:\cL(\A)\rightarrow\cL(\B)$ is such that $\|\cN(\varrho)\|_{\infty}\leq c/|\B|$ for all $\varrho\in\cD(\A)$, then $\|\widehat{\cN}(\varrho)-\cN(\varrho)\|_{\infty} \leq c\varepsilon/|\B|$ for all $\varrho\in\cD(\A)$, and hence $\|\widehat{\cN}(\varrho)-\cN(\varrho)\|_{1} \leq c\varepsilon$ for all $\varrho\in\cD(\A)$. This means that, if $\cN$ sends any input state to an output state that has small operator norm, then it can be approximated by a CP map with Kraus rank of order $\max(|\A|,|\B|)$ in $(1{\rightarrow}\infty)$-norm up to error of order $\varepsilon/|\B|$, and thus a fortiori in $(1{\rightarrow}1)$-norm up to error $\varepsilon$.

Before moving on to the full proof of Theorems \ref{th:main} and \ref{th:main'} let us briefly explain the main ideas in it. These two existence results of CPTP maps having some desired properties actually stem from proving that suitably constructed random ones have them with high probability. One thus has to show that for the random CPTP map $\widehat{\cN}$ the probability is high that, for every input state $\varrho$, $\widehat{\cN}(\varrho)$ is close to $\cN(\varrho)$. This is achieved in two steps: establishing first that this holds for a given input state $\varrho$ and second that it in fact holds for all of them simultaneously. The fact that the individual probability of deviating from average is small is a consequence of the concentration of measure phenomenon in high dimensions. Deriving from there that the global deviation probability is also small is done by discretizing the input set and using a union bound. This line of proof is extremely standard in asymptotic geometric analysis (this is for instance how Dvoretzky's theorem is obtained from Levy's lemma) or in large dimensional probability theory (this is for instance how the supremum of an empirical process is upper bounded through generic chaining). In our case though, the first step requires a careful analysis of the sub-exponential behavior of a certain random variable.

\section{Proofs of the main results}
\label{sec:proofs}

As a crucial step in establishing Theorems \ref{th:main} and \ref{th:main'}, we will need a large deviation inequality for sums of independent $\psi_1$ (aka sub-exponential) random variables. Recall that the $\psi_1$-norm of a random variable $X$ (which quantifies the exponential decay of the tail) may be defined via the growth of moments
\[ \|X\|_{\psi_1} = \sup_{p \in \N}\frac{\big(\E |X|^{p}\big)^{1/p}}{p} .\]
This definition is more practical than the standard definition through the Orlicz function $x \mapsto e^x-1$, and leads to an equivalent norm (see \cite{CGLP}, Corollary 1.1.6). The large deviation inequality for a sum of independent $\psi_1$ random variables is known as Bernstein's inequality and is quoted below.

\begin{theorem}[Bernstein's inequality, see e.g.~\cite{CGLP}, Theorem 1.2.5]
\label{th:Bernstein}
Let $X_1,\ldots,X_n$ be $n$ independent $\psi_1$ random variables.
Set $M=\max_{1\leq i\leq n}\|X_i\|_{\psi_1}$ and $\sigma^2=\sum_{1\leq i\leq n}\|X_i\|_{\psi_1}^2/n$. Then,
\[ \forall\ t>0,\ \P\left(\left|\frac{1}{n}\sum_{i=1}^n \left(X_i -\E X_i\right)\right| >t \right)\leq \exp\left(-c_0 n\min\left(\frac{t^2}{\sigma^2},\frac{t}{M}\right) \right),\]
where $c_0>0$ is a universal constant.
\end{theorem}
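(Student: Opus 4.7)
The plan is to exhibit $\widehat{\cN}$ via a Gaussian randomisation of the Kraus operators of $\cN$, and then establish the desired uniform bound by pointwise concentration combined with a net argument on the input and output unit spheres. Fix a Kraus decomposition $\cN(\varrho)=\sum_{k=1}^{|\rE|}K_k\varrho K_k^{\dagger}$ and write $M:=\sup_{\varrho\in\cD(\A)}\|\cN(\varrho)\|_{\infty}$. Let $g_1,\ldots,g_s$ be i.i.d.\ standard complex Gaussian vectors in $\C^{|\rE|}$, put $\widehat{K}_j:=\frac{1}{\sqrt{s}}\sum_{k=1}^{|\rE|}(g_j)_k K_k$, and set $\widehat{\cN}(\varrho):=\sum_{j=1}^{s}\widehat{K}_j\varrho\widehat{K}_j^{\dagger}$, with $s$ of order $\max(|\A|,|\B|)/\varepsilon^2$ to be tuned. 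By isotropy of the complex Gaussian law one has $\E\widehat{\cN}(\varrho)=\cN(\varrho)$ for every $\varrho$, and $\widehat{\cN}$ is CP of Kraus rank at most $s$ by construction; it therefore suffices to show that with positive probability the random map $\widehat{\cN}$ meets the uniform bound in the theorem.

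For the pointwise step, fix unit vectors $\varphi\in\A$ and $\xi\in\B$ and set $a_k:=\langle\xi|K_k|\varphi\rangle$. Then $G_j:=\sum_k(g_j)_k a_k$ is a centred complex Gaussian of variance $\sigma^2:=\sum_k|a_k|^2=\langle\xi|\cN(\proj{\varphi})|\xi\rangle\leq M$, so $|G_j|^2/\sigma^2$ is a standard exponential random variable and hence has $\psi_1$-norm of order $1$. Consequently the summands $Z_j:=|G_j|^2/s$ of $\langle\xi|\widehat{\cN}(\proj{\varphi})|\xi\rangle=\sum_j Z_j$ satisfy $\|Z_j\|_{\psi_1}=O(\sigma^2/s)\leq O(M/s)$, and an application of Bernstein's inequality (Theorem~\ref{th:Bernstein}) with deviation scale $uM$ gives, after using $\sigma^2\leq M$ to bound both terms in the min,
\[ \P\!\left(\left|\langle\xi|(\widehat{\cN}-\cN)(\proj{\varphi})|\xi\rangle\right|>uM\right)\ \leq\ \exp(-c\,s\,u^2) \qquad\text{for all } 0<u\leq 1, \]
where $c>0$ is a universal constant independent of the channel $\cN$.

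To upgrade this to a uniform bound in $\varphi$ and $\xi$ without paying a $\log(1/\varepsilon)$ price, observe that both $(\widehat{\cN}-\cN)(\proj{\varphi})\in\cL(\B)$ and $(\widehat{\cN}^*-\cN^*)(\proj{\xi})\in\cL(\A)$ are Hermitian, so for any $1/4$-nets $\mathcal{S}_\A$ and $\mathcal{S}_\B$ of the unit spheres of $\A$ and $\B$ (of cardinalities at most $9^{2|\A|}$ and $9^{2|\B|}$ respectively), the standard estimate $\|H\|_\infty\leq 2\sup_{\psi\in\mathcal{S}}|\langle\psi|H|\psi\rangle|$ applied in both arguments yields
\[ \sup_{\varphi,\xi}\left|\langle\xi|(\widehat{\cN}-\cN)(\proj{\varphi})|\xi\rangle\right|\ \leq\ 4\sup_{\varphi\in\mathcal{S}_\A,\,\xi\in\mathcal{S}_\B}\left|\langle\xi|(\widehat{\cN}-\cN)(\proj{\varphi})|\xi\rangle\right|. \]
A union bound with $u=\varepsilon/4$ then bounds the failure probability by $9^{2(|\A|+|\B|)}\exp(-cs\varepsilon^2/16)$, which drops below $1$ once $s\geq C\max(|\A|,|\B|)/\varepsilon^2$ for a large enough universal $C$. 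A deterministic realisation of $\widehat{\cN}$ with the claimed operator-norm bound on pure inputs therefore exists, and convexity of $\|\cdot\|_\infty$ together with $\sup_\varrho\|\cN(\varrho)\|_\infty=\sup_\varphi\|\cN(\proj{\varphi})\|_\infty$ extends the estimate to all $\varrho\in\cD(\A)$.

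The main technical subtlety lies in the concentration step: the $\psi_1$-norm of each $Z_j$ must be controlled by the local variance $\sigma^2=\langle\xi|\cN(\proj{\varphi})|\xi\rangle\leq M$ rather than by an absolute constant, so that Bernstein gives the homogeneous tail $\exp(-cs u^2)$ uniformly in $(\varphi,\xi,\cN)$ once the deviation is rescaled by $M$. This scale-invariance is precisely what allows a constant-scale net on the unit spheres of $\A$ and $\B$ and removes the $\log(|\rE|/\varepsilon)$ factor of Theorem~\ref{th:main}, at the cost of replacing the universal scale $|\B|^{-1}$ on the error by the intrinsic output-purity scale $M$ of the channel.
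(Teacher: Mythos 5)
Your proposal does not address the statement at all. The statement to be proved is Theorem \ref{th:Bernstein}, i.e.~Bernstein's deviation inequality for a sum of $n$ independent $\psi_1$ (sub-exponential) random variables: a purely probabilistic tail bound of the form $\exp\left(-c_0 n\min\left(t^2/\sigma^2,\,t/M\right)\right)$. What you have written instead is a construction of a low-Kraus-rank approximation $\widehat{\cN}$ of a quantum channel via Gaussian randomisation of its Kraus operators, together with a net argument --- in substance a proof sketch of Theorem \ref{th:main'}, not of Theorem \ref{th:Bernstein}. Worse, your argument explicitly \emph{invokes} Theorem \ref{th:Bernstein} in its pointwise concentration step, so as a purported proof of that theorem it would be circular.

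For the record, the paper does not prove Bernstein's inequality either: it quotes it from the literature (\cite{CGLP}, Theorem 1.2.5) as a known black box, with the $\psi_1$-norm defined through the growth of moments $\|X\|_{\psi_1}=\sup_{p\in\N}(\E|X|^p)^{1/p}/p$. A genuine proof would proceed along the classical lines: bound the moment generating function $\E\,e^{\lambda(X_i-\E X_i)}\leq \exp(C\lambda^2\|X_i\|_{\psi_1}^2)$ for $|\lambda|\leq c/\|X_i\|_{\psi_1}$ using the moment growth condition, multiply over the independent summands, and optimise the Chernoff bound over $\lambda$ in that restricted range; the constraint $|\lambda|\lesssim 1/M$ is what produces the two regimes $t^2/\sigma^2$ versus $t/M$ in the minimum. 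None of this appears in your text. Separately, even read as an attempt at Theorem \ref{th:main'}, your construction differs from the paper's (which averages maps $\cN_{\varphi_i}$ built from uniformly random unit vectors in the environment $\rE$, rather than Gaussian-mixing the Kraus operators), but that comparison is moot here since it is not the statement you were asked to prove.
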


Our application of Bernstein's inequality to a suitably chosen sum of independent $\psi_1$ random variables will yield Proposition \ref{prop:fixed} below. Note that in the latter, as well as in several other places in the remainder of the paper, we shall use the following shorthand notation, whenever no confusion is at risk: given a unit vector $\phi$ in $\C^n$, we also denote by $\phi$ the corresponding pure state $\ket{\phi}\!\bra{\phi}$ on $\C^n$.

\begin{proposition} \label{prop:fixed}
Let $\cN:\cL(\A)\rightarrow\cL(\B)$ be a CPTP map with Kraus rank $|\rE|$, defined by
\begin{equation} \label{eq:N} \forall\ \varrho\in\cD(\A),\ \cN(\varrho)=\tr_{\rE}\left[V\varrho V^{\dagger}\right], \end{equation}
for some isometry $V:\A\hookrightarrow\B\otimes\rE$.

For any given unit vector $\varphi$ in $\rE$ define next the CP map $\cN_{\varphi}:\cL(\A)\rightarrow\cL(\B)$ by
\begin{equation} \label{eq:N_phi} \forall\ \varrho\in\cD(\A),\ \cN_{\varphi}(\varrho)= |\rE|\, \tr_{\rE}\left[\left(\Id\otimes\varphi\right) V\varrho V^{\dagger} \left(\Id\otimes\varphi\right) \right]. \end{equation}

Now, fix unit vectors $x$ in $\A$, $y$ in $\B$, and pick random unit vectors $\varphi_1,\ldots,\varphi_n$ in $\rE$, independently and uniformly. Then,
\[ \forall\ 0<\varepsilon<1,\ \P\left( \left|\frac{1}{n}\sum_{i=1}^n \bra{y}\cN_{\varphi_i}\left(x\right)\ket{y} - \bra{y}\cN\left(x\right)\ket{y} \right| > \varepsilon\bra{y}\cN\left(x\right)\ket{y} \right) \leq e^{-cn\varepsilon^2}, \]
where $c>0$ is a universal constant.
\end{proposition}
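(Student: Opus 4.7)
The plan is to reduce each scalar random variable $X_i := \bra{y}\cN_{\varphi_i}(x)\ket{y}$ to a tractable form, bound its $\psi_1$-norm sharply in terms of its mean, and invoke Bernstein's inequality (Theorem \ref{th:Bernstein}). Set $\ket{\psi} := V\ket{x} \in \B\otimes\rE$, which is a unit vector since $V$ is an isometry, and $\ket{\psi_y} := (\bra{y}\otimes\Id_\rE)\ket{\psi} \in \rE$. A direct expansion of \eqref{eq:N_phi} at input $\varrho = \ketbra{x}{x}$ gives
$$X_i = |\rE|\,|\braket{y\otimes\varphi_i}{\psi}|^2 = |\rE|\,|\braket{\varphi_i}{\psi_y}|^2.$$
Writing $\alpha := \|\ket{\psi_y}\|^2 = \bra{y}\cN(x)\ket{y}$ and, assuming $\alpha > 0$ (the opposite case being trivial), $\ket{\hat{\psi}_y} := \ket{\psi_y}/\sqrt{\alpha}$, this becomes $X_i = |\rE|\,\alpha\,|\braket{\varphi_i}{\hat{\psi}_y}|^2$. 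Using $\E\ketbra{\varphi_i}{\varphi_i} = \Id_\rE/|\rE|$ one verifies immediately that $\E X_i = \alpha$.

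By unitary invariance of the Haar measure on the unit sphere in $\C^{|\rE|}$, the random variable $|\braket{\varphi_i}{\hat{\psi}_y}|^2$ has the same distribution as $|\varphi_{i,1}|^2$ for $\varphi_i$ Haar-uniform in $\C^{|\rE|}$; it is thus $\mathrm{Beta}(1,|\rE|{-}1)$-distributed, with moments
$$\E |\braket{\varphi_i}{\hat{\psi}_y}|^{2p} = \frac{p!}{|\rE|(|\rE|+1)\cdots(|\rE|+p-1)} \leq \frac{p!}{|\rE|^p}.$$
Therefore $\E X_i^p \leq \alpha^p\,p!$, so that $(\E X_i^p)^{1/p} \leq \alpha(p!)^{1/p} \leq \alpha\,p$, which yields $\|X_i\|_{\psi_1} \leq \alpha$.

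I then apply Bernstein's inequality to the i.i.d.~non-negative variables $X_1,\ldots,X_n$, with $M \leq \alpha$ and $\sigma^2 \leq \alpha^2$, and threshold $t := \varepsilon\alpha$. Since $0 < \varepsilon < 1$ gives $\min(t^2/\sigma^2,\,t/M) \geq \varepsilon^2$, this produces
$$\P\left(\left|\frac{1}{n}\sum_{i=1}^n(X_i - \E X_i)\right| > \varepsilon\alpha\right) \leq e^{-cn\varepsilon^2},$$
which is the announced bound upon substituting $\alpha = \bra{y}\cN(x)\ket{y}$.

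The main subtlety is the sharpness of the $\psi_1$-norm bound, namely that $\|X_i\|_{\psi_1}$ scales \emph{linearly} in $\E X_i = \bra{y}\cN(x)\ket{y}$ rather than being merely bounded by an absolute constant. It is precisely this linearity that enables a relative (rather than absolute) deviation bound, and it reflects the fact that the squared marginals of a Haar-uniform complex unit vector are approximately exponentially distributed with mean of order $1/|\rE|$, exactly compensating the $|\rE|$ prefactor in the definition of $\cN_\varphi$.
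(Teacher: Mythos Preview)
Your proof is correct and follows essentially the same route as the paper: identify the scalar random variables $X_i$, show that their $\psi_1$-norm is bounded by their common mean $\alpha = \bra{y}\cN(x)\ket{y}$, and apply Bernstein's inequality (Theorem~\ref{th:Bernstein}) with $t=\varepsilon\alpha$. The only cosmetic difference is in the moment computation: the paper packages it as a separate lemma (Lemma~\ref{lemma:psi_1}) using the identity $\E\varphi^{\otimes p} = \binom{s+p-1}{p}^{-1}P_{\Sym^p(\C^s)}$, valid for any sub-normalized $\sigma_y$, whereas you exploit that here $\sigma_y=\ketbra{\psi_y}{\psi_y}$ is rank one and invoke the $\mathrm{Beta}(1,|\rE|-1)$ law of $|\braket{\varphi}{\hat\psi_y}|^2$ directly; both routes yield the identical bound $\E X_i^p \leq \alpha^p p!$.
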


Note that, by construction, the CP map $\cN_{\varphi}:\cL(\A)\rightarrow\cL(\B)$ defined in Equation \eqref{eq:N_phi} has Kraus rank $1$. Indeed, it has $V_\varphi=\sqrt{|\rE|}\left(\Id\otimes\varphi\right) V:\A\hookrightarrow\B\otimes\mathrm{span}\{\varphi\}$ as Stinespring embedding, which has effective environment dimension $1$.

In order to derive this concentration result, we will need first of all an estimate on the $\psi_1$-norm of a certain random variable appearing in our construction. This is the content of Lemma \ref{lemma:psi_1} below.

\begin{lemma} \label{lemma:psi_1}
Fix $d,s\in\N$. Let $\sigma$ be a state on $\C^{d}\otimes\C^s$ and $y$ be a unit vector in $\C^{d}$. Next, for $\varphi$ a uniformly distributed unit vector in $\C^s$ define the random variable
\[ X_{\varphi}(\sigma,y) = \tr\left[ y\otimes\varphi\,\sigma\right]. \]
Then, $X_{\varphi}(\sigma,y)$ is a $\psi_1$ random variable with mean and $\psi_1$-norm satisfying
\begin{equation} \label{eq:mean-psi_1} \E X_{\varphi}(\sigma,y) = \frac{1}{s}\tr\left[ y\otimes\Id\,\sigma\right]\ \ \text{and}\ \ \|X_{\varphi}(\sigma,y)\|_{\psi_1} \leq \frac{1}{s}\tr\left[ y\otimes\Id\,\sigma\right]. \end{equation}
\end{lemma}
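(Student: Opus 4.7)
The starting observation is that, since $y$ and $\varphi$ are rank-one projectors, one can peel off the $y$-dependence to reduce everything to a quadratic form of a Haar-random unit vector. Concretely, I would set $A=\langle y|\sigma|y\rangle \in\cL(\C^s)$, the positive semi-definite operator obtained by contracting $\sigma$ against $|y\rangle$ on the first tensor factor; note that $\tr A = \tr[(y\otimes\Id)\sigma]$. Then $X_\varphi(\sigma,y) = \langle\varphi|A|\varphi\rangle$, and the mean formula is immediate from the well-known identity $\E\, |\varphi\rangle\!\langle\varphi| = \Id/s$ for $\varphi$ uniform on the unit sphere of $\C^s$:
\[
\E X_\varphi(\sigma,y) \;=\; \tr\bigl(A\,\E |\varphi\rangle\!\langle\varphi|\bigr) \;=\; \frac{\tr A}{s} \;=\; \frac{1}{s}\tr[(y\otimes \Id)\sigma].
\]

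For the $\psi_1$-norm, I would invoke the higher-order analogue: $\E\, |\varphi\rangle\!\langle\varphi|^{\otimes p} = P_{\mathrm{sym}}^{(p)}/\binom{s+p-1}{p}$, where $P_{\mathrm{sym}}^{(p)}$ is the orthogonal projector onto the symmetric subspace of $(\C^s)^{\otimes p}$ and $\binom{s+p-1}{p}$ is its dimension (a standard consequence of Schur-Weyl duality, or a direct integration over the unit sphere). Since $X_\varphi^p = \tr\bigl(A^{\otimes p}\,|\varphi\rangle\!\langle\varphi|^{\otimes p}\bigr)$, this gives
\[
\E X_\varphi^p \;=\; \frac{\tr\bigl(A^{\otimes p}\, P_{\mathrm{sym}}^{(p)}\bigr)}{\binom{s+p-1}{p}} \;\leq\; \frac{\tr(A^{\otimes p})}{\binom{s+p-1}{p}} \;=\; \frac{(\tr A)^p}{\binom{s+p-1}{p}},
\]
where the inequality uses $A^{\otimes p}\geq 0$ and $0\leq P_{\mathrm{sym}}^{(p)}\leq \Id$.

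It then remains a purely elementary step. Using $\binom{s+p-1}{p} = \prod_{k=0}^{p-1}(s+k)/p! \geq s^p/p!$ together with $p!\leq p^p$, I would obtain $(\E X_\varphi^p)^{1/p} \leq (p!)^{1/p}\tr(A)/s \leq p\,\tr(A)/s$. Dividing by $p$ and taking the supremum over $p\in\N$ yields $\|X_\varphi(\sigma,y)\|_{\psi_1}\leq \tr A / s = \tr[(y\otimes\Id)\sigma]/s$, as required. There is no substantive obstacle in this argument: the only ingredient one really needs is the symmetric-subspace moment formula, and the rest is a one-line combinatorial estimate. The crucial feature, which is what makes this lemma useful downstream in Proposition \ref{prop:fixed}, is that the mean and the $\psi_1$-norm end up with \emph{the same} prefactor $\tr[(y\otimes\Id)\sigma]/s$, so that Bernstein's inequality will subsequently produce a deviation bound with relative-error scaling.
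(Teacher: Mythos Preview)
Your proof is correct and essentially identical to the paper's own argument: the paper introduces $\sigma_y=\tr_{\C^d}[(y\otimes\Id)\sigma]$ (your $A$), computes the mean via $\E\proj{\varphi}=\Id/s$, bounds the $p$-th moment using the symmetric-subspace formula $\E\,\varphi^{\otimes p}=P_{\Sym^p(\C^s)}/\binom{s+p-1}{p}$ together with $P_{\Sym^p}\leq\Id$, and then applies the same combinatorial estimates $\binom{s+p-1}{p}\geq s^p/p!$ and $p!\leq p^p$.
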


\begin{proof}
To begin with, recall that, for any $p\in\N$, we have, for $\varphi$ a uniformly distributed unit vector in $\C^s$,
\[ \E\varphi^{\otimes p} = \frac{1}{{s+p-1 \choose p}} P_{\Sym^{p}(\C^s)}, \]
%\frac{(k-1)!}{(k+p-1)!}\sum_{\pi\in\mathcal{S}_p} U_{\pi}
where $P_{\Sym^{p}(\C^s)}$ denotes the orthogonal projector onto the completely symmetric subspace of $(\C^s)^{\otimes p}$. Indeed, $\E\varphi^{\otimes p}$ commutes with all $U^{\otimes p}$, for $U\in\mathcal{U}(\C^s)$, so by Schur's lemma it has to be proportional to $P_{\Sym^{p}(\C^s)}$, and the normalization is given by observing that $\tr(\E\varphi^{\otimes p})=1$, together with the fact that $\dim(\Sym^{p}(\C^s))={s+p-1 \choose p}$.
%, and for each $\pi\in\mathcal{S}_p$, $U_{\pi}$ denotes the associated permutation unitary on $(\C^k)^{\otimes p}$ (see e.g.~\cite{Harrow} for more details on that matter).

Now, setting $\sigma_y= \tr_{\C^{d}}\left[y\otimes\Id\, \sigma\right]$, positive sub-normalized operator on $\C^s$, we see that $X_{\varphi}(\sigma,y) = \tr\left[\varphi\,\sigma_y\right]$. Hence, we clearly have first of all the first statement in Equation \eqref{eq:mean-psi_1}, namely
\[ \E X_{\varphi}(\sigma,y) = \frac{1}{s} \tr \left[\Id\,\sigma_y\right] =\frac{1}{s} \tr\left[y\otimes\Id\,\sigma\right]. \]
What is more, for any $p\in\N$, $\left|X_{\varphi}(\sigma,y)\right|^p = \left(\tr \left[ \varphi\,\sigma_y\right]\right)^p = \tr \left[ \varphi^{\otimes p}\,\sigma_y^{\otimes p}\right]$. And therefore,
\[ \E \left|X_{\varphi}(\sigma,y)\right|^p = \frac{1}{{s+p-1 \choose p}} \tr\left[P_{\Sym^{p}(\C^s)}\sigma_y^{\otimes p}\right] \leq \frac{1}{{s+p-1 \choose p}} \tr\left[\sigma_y^{\otimes p}\right] \leq \left(\frac{p}{s} \tr\left[\sigma_y\right]\right)^p, \]
where the last inequality is simply by the rough bounds $p!\leq p^p$ and $(s+p-1)!/(s-1)!\geq s^p$.
%= \frac{(s-1)!}{(s+2q-1)!} \sum_{\pi\in\mathcal{S}_{2q}} \tr\left[ U_{\pi}\,\sigma_y \right]. \]
%Yet, as shown in \cite{AL}, Lemma 7.2, for any $\pi\in\mathcal{S}_{2q}$, $\tr\left[U_{\pi}\,\sigma_y\right]\leq \|\sigma_y\|_{2(1)}^{2q}$, where the ``modified'' $2$-norm $\|\cdot\|_{2(1)}$ is defined, for any $\Delta\in\cH(\C^s)$, by $\|\Delta\|_{2(1)}=\sqrt{(\tr\Delta)^2+\tr(\Delta^2)}$ (see \cite{LW}). Thus, putting everything together,
%\[ \E X_{\varphi}(\sigma,y)^{2q} \leq \frac{(s-1)!}{(s+2q-1)!}\,(2q)!\,\|\sigma_y\|_{2(1)}^{2q} \leq \left(\frac{1}{s}\,2q\,\sqrt{2}\,\tr\left[\sigma_y\right]\right)^{2q}, \]
%where the last inequality is simply because $\|\sigma_y\|_{2(1)}\leq\sqrt{2}\,\tr\left[\sigma_y\right]$, while $(s-1)!/(s+2q-1)!\leq (1/s)^{2q}$ and $(2q)!\leq (2q)^{2q}$.

So in the end, we get as wanted the second statement in Equation \eqref{eq:mean-psi_1}, namely
\[ \|X_{\varphi}(\sigma,y)\|_{\psi_1} =\sup_{p\in\N} \frac{\left(\E \left|X_{\varphi}(\sigma,y)\right|^p\right)^{1/p}}{p} \leq \frac{1}{s}\tr\left[y\otimes\Id\,\sigma\right]. \]
This concludes the proof of Lemma \ref{lemma:psi_1}.
\end{proof}

\begin{proof}[Proof of Proposition \ref{prop:fixed}] Note first of all that we can obviously re-write
\[ \bra{y}\cN(x)\ket{y} = \tr\left[y\otimes\Id\, VxV^{\dagger}\right]\ \text{and}\ \forall\ \varphi\in S_{\rE},\ \bra{y}\cN_{\varphi}(x)\ket{y} = |\rE|\,\tr\left[y\otimes\varphi\, VxV^{\dagger}\right]. \]
Next, for each $1\leq i\leq n$, define the random variable $Y_i=\bra{y}\cN_{\varphi_i}(x)\ket{y}$. By Lemma \ref{lemma:psi_1}, combined with the observation just made above, we know that these are independent $\psi_1$ random variables with mean $\bra{y}\cN(x)\ket{y}$ and $\psi_1$-norm upper bounded by $\bra{y}\cN(x)\ket{y}$. So by Bernstein's inequality, recalled as Theorem \ref{th:Bernstein}, we get that
\[ \forall\ t>0,\ \P\left( \left|\frac{1}{n}\sum_{i=1}^n Y_i - \bra{y}\cN(x)\ket{y} \right| > t \right) \leq \exp\left(-c_0n\min\left(\frac{t^2}{\bra{y}\cN(x)\ket{y}^2},\frac{t}{\bra{y}\cN(x)\ket{y}}\right) \right), \]
where $c_0>0$ is a universal constant. And hence,
\[ \forall\ 0<\varepsilon<1,\ \P\left( \left|\frac{1}{n}\sum_{i=1}^n Y_i - \bra{y}\cN(x)\ket{y} \right| > \varepsilon \bra{y}\cN(x)\ket{y} \right) \leq e^{-c_0n\varepsilon^2}, \]
which is precisely the result announced in Proposition \ref{prop:fixed}.
\end{proof}

Having at hand the ``fixed $x,y$'' concentration inequality of Proposition \ref{prop:fixed}, we can now get its ``for all $x,y$'' counterparts by a standard net-argument. This appears as the following Propositions \ref{prop:forall} and \ref{prop:forall'}.

\begin{proposition} \label{prop:forall}
	Let $\cN:\cL(\A)\rightarrow\cL(\B)$ be a CPTP map, as characterized by Equation \eqref{eq:N}, and for each unit vector $\varphi$ in $\rE$ define the CP map $\cN_{\varphi}:\cL(\A)\rightarrow\cL(\B)$ as in Equation \eqref{eq:N_phi}. Next, for $\varphi_1,\ldots,\varphi_n$ independent uniformly distributed unit vectors in $\rE$, set $\cN_{\varphi^{(n)}}=\big(\sum_{i=1}^n\cN_{\varphi_i}\big)/n$. Then, for any $0<\varepsilon<1$,
	\begin{align*} 
	& \P\left( \forall\ x\in S_{\A},y\in S_{\B},\ \left|\bra{y}\cN_{\varphi^{(n)}}(x)-\cN(x)\ket{y} \right| \leq \varepsilon\bra{y}\cN(x)\ket{y} + \frac{\varepsilon}{|\B|} \right) \\
	& \qquad \geq 1 - \left(\frac{24|\rE||\B|}{\varepsilon}\right)^{2(|\A|+|\B|)} e^{-cn\varepsilon^2}, \end{align*}
	where $c>0$ is a universal constant.
\end{proposition}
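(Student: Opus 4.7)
The plan is the standard \emph{concentration plus $\delta$-net} scheme. First I would fix a resolution $\delta>0$ (to be tuned at the end in terms of $\varepsilon$, $|\rE|$, $|\B|$) and choose $\delta$-nets $\cS_{\A}\subset S_{\A}$ and $\cS_{\B}\subset S_{\B}$ in Euclidean distance, of cardinalities $|\cS_{\A}|\leq(3/\delta)^{2|\A|}$ and $|\cS_{\B}|\leq(3/\delta)^{2|\B|}$ (via the standard volumetric bound on the unit sphere of a real $2d$-dimensional space). Applying Proposition \ref{prop:fixed} to each pair $(x',y')\in\cS_{\A}\times\cS_{\B}$ and taking a union bound, with probability at least $1-(3/\delta)^{2(|\A|+|\B|)}e^{-cn\varepsilon^2}$ the pointwise estimate
\[ \left|\bra{y'}\cN_{\varphi^{(n)}}(x')-\cN(x')\ket{y'}\right|\leq\varepsilon\bra{y'}\cN(x')\ket{y'} \]
holds simultaneously over the whole product net.

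Next I would extend this bound to every $(x,y)\in S_{\A}\times S_{\B}$ by a Lipschitz argument. Since $\cN$ is CPTP, the function $(x,y)\mapsto\bra{y}\cN(x)\ket{y}$ is $O(1)$-Lipschitz via trace-norm contractivity of $\cN$ and $\|\proj{x}-\proj{x'}\|_1\leq 2\|x-x'\|$. For the random approximant, write $\cN_{\varphi^{(n)}}=\frac{1}{n}\sum_{i=1}^n\cN_{\varphi_i}$; each summand satisfies $\|\cN_{\varphi_i}(A)\|_\infty\leq|\rE|\|A\|_\infty$ for every Hermitian $A$ (because $V$ is an isometry and $\|\Id_{\B}\otimes\bra{\varphi_i}\|_\infty=1$), so by convexity $\cN_{\varphi^{(n)}}$ is $|\rE|$-Lipschitz in operator norm and $(x,y)\mapsto\bra{y}\cN_{\varphi^{(n)}}(x)\ket{y}$ is at worst $O(|\rE|)$-Lipschitz. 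Approximating any $(x,y)$ by a net point $(x',y')$ at distance $\leq\delta$, the triangle inequality then yields, on the good event,
\[ \left|\bra{y}\cN_{\varphi^{(n)}}(x)-\cN(x)\ket{y}\right|\leq\varepsilon\bra{y}\cN(x)\ket{y}+O(|\rE|)\delta. \]
Setting $\delta\asymp\varepsilon/(|\rE||\B|)$ absorbs the discretization error into the advertised additive term $\varepsilon/|\B|$, and the resulting net cardinality $(O(|\rE||\B|/\varepsilon))^{2(|\A|+|\B|)}$ matches the $(24|\rE||\B|/\varepsilon)^{2(|\A|+|\B|)}$ prefactor of the proposition up to absolute constants.

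The only real subtlety in this plan is the Lipschitz step: in contrast to $\cN$, which is trace-norm contractive, its $\varphi$-randomized counterpart $\cN_{\varphi^{(n)}}$ may have operator norm as large as $|\rE|$ because of the $|\rE|$ prefactor built into its definition. It is precisely this potentially huge Lipschitz constant that forces the $|\rE|$ factor inside the base of the net size, and in turn will be responsible for the $\log|\rE|$ factor appearing in the Kraus-rank bound of Theorem \ref{th:main} once the failure probability is inverted.
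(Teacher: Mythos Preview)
Your proposal is correct and follows essentially the same strategy as the paper's proof: a union bound of Proposition \ref{prop:fixed} over $\delta$-nets in $S_{\A}$ and $S_{\B}$ (with the same volumetric cardinality bound $(3/\delta)^{2d}$), followed by a Lipschitz extension in which $\cN$ contributes an $O(1)$ constant while $\cN_{\varphi^{(n)}}$ contributes a factor $|\rE|$, and finally the choice $\delta\asymp\varepsilon/(|\rE||\B|)$. The paper phrases the Lipschitz step slightly differently, working directly with the difference map $\mathcal{E}=\cN_{\varphi^{(n)}}-\cN$ and exploiting its Hermiticity-preserving property to control the off-diagonal terms $\bra{\tilde y}\mathcal{E}(\ketbra{x}{x-\tilde x})\ket{\tilde y}$, but this is equivalent to your operator-norm bound; your identification of the $|\rE|$ Lipschitz constant as the source of the $\log|\rE|$ in Theorem \ref{th:main} is also exactly the paper's point.
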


Note that, by construction, the random CP map $\cN_{\varphi^{(n)}}:\cL(\A)\rightarrow\cL(\B)$ introduced above has Kraus rank at most $n$. Indeed, it is a convex combination of $n$ Kraus rank $1$ random CP maps $\cN_{\varphi_i}:\cL(\A)\rightarrow\cL(\B)$, $1\leq i\leq n$.

\begin{proof}
	Fix $0<\alpha,\beta<1$ and consider $\mathcal{A}_{\alpha},\mathcal{B}_{\beta}$ minimal $\alpha,\beta$-nets within the unit spheres of $\A,\B$, so that by a standard volumetric argument $\left|\mathcal{A}_{\alpha}\right|\leq (3/\alpha)^{2|\A|},\left|\mathcal{B}_{\beta}\right|\leq (3/\beta)^{2|\B|}$ (see e.g.~\cite{Pisier}, Chapter 4). Then, by Proposition \ref{prop:fixed} and the union bound, we get that, for any $\varepsilon>0$,
	\begin{align} \label{eq:M_delta} 
	& \P\left( \forall\ x\in\mathcal{A}_{\alpha},y\in\mathcal{B}_{\beta},\ \left| \bra{y} \cN_{\varphi^{(n)}}(x)-\cN(x) \ket{y} \right| \leq \varepsilon\bra{y}\cN\left(x\right)\ket{y} \right) \nonumber \\
	&\qquad \geq 1 - \left(\frac{3}{\alpha}\right)^{2|\A|}\left(\frac{3}{\beta}\right)^{2|\B|}e^{-cn\varepsilon^2}. 
	\end{align}
	Now, fix $\varepsilon>0$ and suppose that $\mathcal{E}:\cL(\A)\rightarrow\cL(\B)$ is a Hermiticity-preserving map which is such that
	\begin{equation} \label{eq:true-net} \forall\ x\in\mathcal{A}_{\alpha},\ \forall\ y\in\mathcal{B}_{\beta},\ \left|\bra{y}\mathcal{E}(x)\ket{y}\right| \leq \varepsilon\bra{y}\cN\left(x\right)\ket{y}. \end{equation}
	Assume that $\mathcal{E}$ additionally satisfies the boundedness property
	\begin{equation} \label{eq:boundedness} \forall\ x\in S_{\A},\ \forall\ y\in S_{\B},\ \left|\bra{y}\mathcal{E}(x)\ket{y}\right| \leq |\rE|.\end{equation}
	Note that if $\mathcal{E}$ is Hermicity-preserving, then 
	\begin{align*}
	    & \forall\ y\in S_{\B},\ \sup_{v,v'\in S_{\A}}|\bra{y}\mathcal{E}(\ketbra{v}{v'})\ket{y}|= \sup_{v\in S_{\A}}|\bra{y}\mathcal{E}(\ketbra{v}{v})\ket{y}| ,\\
	    & \forall\ x\in S_{\A},\ \sup_{w,w'\in S_{\B}}|\bra{w}\mathcal{E}(\ketbra{x}{x})\ket{w'}|= \sup_{w\in S_{\B}}|\bra{w}\mathcal{E}(\ketbra{x}{x})\ket{w}| .
	\end{align*}
	Indeed, this is because for any $X$, $\mathcal{E}(X^{\dagger})=\mathcal{E}(X)^{\dagger}$. Hence, it will be useful to us later on to keep in mind that assumption \eqref{eq:boundedness} is actually equivalent to
	\[ \forall\ x,x'\in S_{\A},\ \forall\ y,y'\in S_{\B},\ \begin{cases} \left|\bra{y}\mathcal{E}(\ketbra{x}{x'})\ket{y}\right| \leq |\rE| \\ \left|\bra{y}\mathcal{E}(\ketbra{x}{x})\ket{y'}\right| \leq |\rE| \end{cases}. \]
	Then, for any unit vectors $x\in S_{\A}$, $y\in S_{\B}$, we know by definition that there exist $\tilde{x}\in\mathcal{A}_{\alpha}$, $\tilde{y}\in\mathcal{B}_{\beta}$ such that $\|x-\tilde{x}\|\leq\alpha$, $\|y-\tilde{y}\|\leq\beta$. Hence, first of all
	\begin{align*} \left|\bra{y}\mathcal{E}(\proj{x})\ket{y}\right| & \leq \left|\bra{\tilde{y}}\mathcal{E}(\proj{x})\ket{\tilde{y}}\right| + \left|\bra{y-\tilde{y}}\mathcal{E}(\proj{x})\ket{\tilde{y}}\right| + \left|\bra{y}\mathcal{E}(\proj{x})\ket{y-\tilde{y}}\right|\\
	& \leq \left|\bra{\tilde{y}}\mathcal{E}(\proj{x})\ket{\tilde{y}}\right| + 2\beta|\rE|, \end{align*}
	where the second inequality follows from the boundedness property \eqref{eq:boundedness} of $\mathcal{E}$, combined with the fact that $\|y-\tilde{y}\|\leq\beta$. Then similarly, because $\|x-\tilde{x}\|\leq\alpha$,
	\begin{align*} \left|\bra{\tilde{y}}\mathcal{E}(\proj{x})\ket{\tilde{y}}\right| & \leq \left|\bra{\tilde{y}}\mathcal{E}(\proj{\tilde{x}})\ket{\tilde{y}}\right| + \left|\bra{\tilde{y}}\mathcal{E}(\ketbra{x-\tilde{x}}{\tilde{x}})\ket{\tilde{y}}\right| + \left|\bra{\tilde{y}}\mathcal{E}(\ketbra{x}{x-\tilde{x}})\ket{\tilde{y}}\right| \\
	& \leq \left|\bra{\tilde{y}}\mathcal{E}(\proj{\tilde{x}})\ket{\tilde{y}}\right| + 2\alpha|\rE|. \end{align*}
	Putting together the two previous upper bounds, we see that we actually have
	\[ \left|\bra{y}\mathcal{E}(\proj{x})\ket{y}\right| \leq \left|\bra{\tilde{y}}\mathcal{E}(\proj{\tilde{x}})\ket{\tilde{y}}\right| + 2|\rE|(\alpha+\beta) \leq \varepsilon \bra{\tilde{y}}\mathcal{N}(\proj{\tilde{x}})\ket{\tilde{y}} + 2|\rE|(\alpha+\beta), \]
	where the second inequality is by assumption \eqref{eq:true-net} on $\mathcal{E}$. Now, arguing just as before (using this time that $\mathcal{N}$ satisfies the boundedness property $\left|\bra{y}\mathcal{N}(\ketbra{x}{x'})\ket{y'}\right|\leq 1$ for any $x,x'\in S_{\A}$ and $y,y'\in S_{\B}$), we get
	\[ \bra{\tilde{y}}\mathcal{N}(\proj{\tilde{x}})\ket{\tilde{y}} \leq \bra{y}\mathcal{N}(\proj{x})\ket{y}+2(\alpha+\beta) . \]
	So eventually, what we obtain is
	\[ \left|\bra{y}\mathcal{E}(x)\ket{y}\right| \leq \varepsilon\big(\bra{y}\cN(x)\ket{y} + 2(\alpha+\beta)\big) + 2|\rE|(\alpha+\beta) \leq \varepsilon\bra{y}\cN(x)\ket{y} + 4|\rE|(\alpha+\beta). \]
	Therefore, choosing $\alpha=\beta=\varepsilon/(8|\rE||\B|)$ (and observing that, by the way $\cN_{\varphi^{(n)}}$ is constructed, $\cN_{\varphi^{(n)}}-\cN$ fulfills condition \eqref{eq:boundedness}), it follows from Equation \eqref{eq:M_delta} that, for any $\varepsilon>0$,
	\begin{align*} 
	& \P\left( \forall\ x\in S_{\A},y\in S_{\B},\ \left| \bra{y}\cN_{\varphi^{(n)}}(x)-\cN(x)\ket{y} \right| \leq \varepsilon\bra{y}\cN\left(x\right)\ket{y} +\frac{\varepsilon}{|\B|} \right) \\
	& \qquad \geq 1 - \left(\frac{24|\rE||\B|}{\varepsilon}\right)^{2(|\A|+|\B|)}e^{-cn\varepsilon^2}, 
	\end{align*}
	which is exactly what we wanted to show.
\end{proof}

\begin{proposition} \label{prop:forall'}
	Let $\cN:\cL(\A)\rightarrow\cL(\B)$ be a CPTP map, as characterized by Equation \eqref{eq:N}, and for each unit vector $\varphi$ in $\rE$ define the CP map $\cN_{\varphi}:\cL(\A)\rightarrow\cL(\B)$ as in Equation \eqref{eq:N_phi}. Next, for $\varphi_1,\ldots,\varphi_n$ independent uniformly distributed unit vectors in $\rE$, set $\cN_{\varphi^{(n)}}=\big(\sum_{i=1}^n\cN_{\varphi_i}\big)/n$. Then, for any $0<\varepsilon<1$,
	\[ \P\left( \sup_{x\in S_{\A},y\in S_{\B}} \left| \bra{y}\cN_{\varphi^{(n)}}(x)-\cN(x)\ket{y} \right| \leq \varepsilon \sup_{x\in S_{\A},y\in S_{\B}} \bra{y}\cN(x)\ket{y} \right) \geq 1 - 225^{|\A|+|\B|} e^{-cn\varepsilon^2}, \]
	where $c>0$ is a universal constant.
\end{proposition}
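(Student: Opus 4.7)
The plan is to mimic the net-and-union-bound strategy of Proposition \ref{prop:forall}, but to replace the pointwise bound there by the global one involving $M := \sup_{x \in S_{\A},\, y \in S_{\B}} \bra{y}\cN(x)\ket{y}$. The $|\rE|$-dependence, which in Proposition \ref{prop:forall} forced a mesh of size $\varepsilon/(8|\rE||\B|)$ and a correspondingly large prefactor, can now be avoided by working with nets of constant mesh, provided one deploys a self-improving perturbation argument that does not rely on any a priori $|\rE|$-type bound on $\|\cN_{\varphi^{(n)}} - \cN\|_{1 \to \infty}$.

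Concretely, I would fix minimal $1/5$-nets $\mathcal{A}_{1/5} \subset S_{\A}$ and $\mathcal{B}_{1/5} \subset S_{\B}$, of cardinalities at most $15^{2|\A|}$ and $15^{2|\B|}$ respectively. Applying Proposition \ref{prop:fixed} with error parameter $\varepsilon/5$ at each pair $(\tilde x, \tilde y) \in \mathcal{A}_{1/5} \times \mathcal{B}_{1/5}$ and taking a union bound, the event
\[ N := \max_{\tilde x \in \mathcal{A}_{1/5},\, \tilde y \in \mathcal{B}_{1/5}} \bigl|\bra{\tilde y}\mathcal{E}(\proj{\tilde x})\ket{\tilde y}\bigr| \,\leq\, (\varepsilon/5)\, M \]
occurs with probability at least $1 - 225^{|\A|+|\B|}\, e^{-c n \varepsilon^2/25}$, where $\mathcal{E} := \cN_{\varphi^{(n)}} - \cN$ is Hermiticity-preserving and I have used the crude majorization $\bra{\tilde y}\cN(\proj{\tilde x})\ket{\tilde y} \leq M$.

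The heart of the proof is the transfer of this net-bound to the full spheres. Set $F := \sup_{x \in S_{\A},\, y \in S_{\B}} |\bra{y}\mathcal{E}(\proj{x})\ket{y}|$, which is finite by compactness. The key observation is that $F$ itself dominates the $1 \to \infty$ norm of $\mathcal{E}$ on Hermitian inputs: decomposing a Hermitian $X$ as $X_+ - X_-$ with $X_\pm \geq 0$ and writing each $X_\pm$ as a nonnegative combination of rank-one projectors gives $\|\mathcal{E}(X)\|_\infty \leq F \|X\|_1$. Given arbitrary $(x, y) \in S_{\A} \times S_{\B}$ and approximants $(\tilde x, \tilde y)$ in the nets with $\|x - \tilde x\|, \|y - \tilde y\| \leq 1/5$, the bound $\|\proj{x} - \proj{\tilde x}\|_1 \leq 2\|x - \tilde x\|$ produces
\[ \bigl|\bra{y}\mathcal{E}(\proj{x})\ket{y}\bigr| \,\leq\, \bigl|\bra{y}\mathcal{E}(\proj{\tilde x})\ket{y}\bigr| + (2/5)\, F, \]
while the Hermitian operator $A := \mathcal{E}(\proj{\tilde x})$ has $\|A\|_\infty \leq F$, so the identity $\bra{y}A\ket{y} - \bra{\tilde y}A\ket{\tilde y} = \bra{y - \tilde y}A\ket{y} + \bra{\tilde y}A\ket{y - \tilde y}$ yields the analogous estimate in the $y$ variable. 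Combining,
\[ \bigl|\bra{y}\mathcal{E}(\proj{x})\ket{y}\bigr| \,\leq\, \bigl|\bra{\tilde y}\mathcal{E}(\proj{\tilde x})\ket{\tilde y}\bigr| + (4/5)\, F, \]
and taking the supremum over $(x, y)$ produces the self-improving inequality $F \leq N + (4/5) F$, hence $F \leq 5 N \leq \varepsilon M$ on the good event.

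The main subtlety is precisely this self-improving rearrangement: it is only legitimate because $2(\alpha + \beta) = 4/5 < 1$, and the choice $\alpha = \beta = 1/5$ is the sweet spot simultaneously giving a contraction factor of $5$ and producing the target prefactor $225^{|\A|+|\B|} = 15^{2(|\A|+|\B|)}$. The factor $1/25$ lost in the exponent is absorbed into the universal constant $c$ appearing in the statement.
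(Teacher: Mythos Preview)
Your proposal is correct and follows essentially the same route as the paper: constant-mesh nets with $\alpha=\beta=1/5$ combined with the self-improving inequality $F \leq (\text{net bound}) + 2(\alpha+\beta)F$, then solving for $F$. The only cosmetic differences are that the paper applies Proposition~\ref{prop:fixed} with parameter $\varepsilon$, keeps the pointwise bound $\varepsilon\bra{\tilde y}\cN(\tilde x)\ket{\tilde y}$ through the expansion (picking up an extra $(1+2(\alpha+\beta))$ factor to arrive at $F\leq 9\varepsilon M$), and relabels $9\varepsilon\mapsto\varepsilon$ at the end, whereas you apply it with $\varepsilon/5$ and bound by $M$ immediately; and the paper handles the perturbation via the rank-one splitting $\proj{x}-\proj{\tilde x}=\ketbra{x-\tilde x}{\tilde x}+\ketbra{x}{x-\tilde x}$ together with the Hermiticity-preserving reduction, rather than your (equivalent) $\|\mathcal{E}(X)\|_\infty\leq F\|X\|_1$ argument.
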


\begin{proof}
	We will argue in a way very similar to what was done in the proof of Proposition \ref{prop:forall}, and hence skip some of the details here. Again, fix $0<\alpha,\beta< 1/4$ and consider $\mathcal{A}_{\alpha},\mathcal{B}_{\beta}$ minimal $\alpha,\beta$-nets within the unit spheres of $\A,\B$.
	Now, fix $\varepsilon>0$ and suppose that $\mathcal{E}:\cL(\A)\rightarrow\cL(\B)$ is a Hermiticity-preserving map which is such that,
	\[ \forall\ x\in\mathcal{A}_{\alpha},\ \forall\ y\in\mathcal{B}_{\beta},\ \left|\bra{y}\mathcal{E}(x)\ket{y}\right| \leq \varepsilon\bra{y}\cN\left(x\right)\ket{y}. \]
	Then, for any unit vectors $x\in S_{\A}$, $y\in S_{\B}$,
	\begin{align*}
	& \left|\bra{y}\mathcal{E}(\proj{x})\ket{y}\right| \\
	& \qquad \leq \varepsilon \big( \bra{y}\cN(\proj{x})\ket{y} + 2\alpha\,{\sup}_{v,v'}\bra{y}\cN(\ketbra{v}{v'})\ket{y} + 2\beta\,{\sup}_{w,w'} \bra{w}\cN(\proj{\tilde{x}})\ket{w'} \big) \\
	& \qquad \quad + 2\alpha\,{\sup}_{v,v'}\left|\bra{\tilde{y}}\mathcal{E}(\ketbra{v}{v'})\ket{\tilde{y}}\right| + 2\beta\,{\sup}_{w,w'}\left|\bra{w}\mathcal{E}(\proj{x})\ket{w'}\right|,
	\end{align*}
	where $\tilde{x}\in\mathcal{A}_{\alpha}$, $\tilde{y}\in\mathcal{B}_{\beta}$ are such that $\|x-\tilde{x}\|\leq\alpha$, $\|y-\tilde{y}\|\leq\beta$. And consequently, taking supremum over unit vectors $x\in S_{\A}$, $y\in S_{\B}$, we get
	\[ {\sup}_{x,y}\left|\bra{y}\mathcal{E}(x)\ket{y}\right| \leq\,  \varepsilon(1+2(\alpha+\beta))\,{\sup}_{x,y} \bra{y}\cN(x)\ket{y} + 2(\alpha+\beta) \,{\sup}_{x,y}\left|\bra{y}\mathcal{E}(x)\ket{y}\right|, \]
	that is equivalently,
	\[ {\sup}_{x,y}\left|\bra{y}\mathcal{E}(x)\ket{y}\right| \leq \varepsilon\,\frac{1+2(\alpha+\beta)}{1-2(\alpha+\beta)}\,{\sup}_{x,y} \bra{y}\cN(x)\ket{y}. \]
	Therefore, choosing $\alpha=\beta=1/5$, so that $(1+2(\alpha+\beta))/(1-2(\alpha+\beta))=9$ and $3/\alpha=3/\beta=15$, we eventually obtain that, for any $0<\varepsilon<1$,
	\[ \P\left( \sup_{x\in S_{\A},y\in S_{\B}} \left| \bra{y}\cN_{\varphi^{(n)}}(x)-\cN(x)\ket{y} \right| \leq 9\varepsilon \sup_{x\in S_{\A},y\in S_{\B}} \bra{y}\cN(x)\ket{y} \right) \geq 1 - 15^{2(|\A|+|\B|)}e^{-cn\varepsilon^2}, \]
	which, after relabelling $9\varepsilon$ in $\varepsilon$, implies precisely the result announced in Proposition \ref{prop:forall'}.
\end{proof}

\begin{proof}[Proof of Theorem \ref{th:main}]
	Because operator-ordering is preserved by convex combinations, it follows from Proposition \ref{prop:forall} that there exists a universal constant $c>0$ such that, for any $\varepsilon>0$,
	\begin{align*} 
	& \P\left( \forall\ \rho\in\cD(\A),\ -\varepsilon\left(\cN(\rho) +\frac{\Id}{|\B|}\right) \leq \cN_{\varphi^{(n)}}(\rho) -\cN(\rho) \leq \varepsilon\left(\cN(\rho) +\frac{\Id}{|\B|}\right) \right) \\
	& \qquad \geq 1 - \left(\frac{24|\rE||\B|}{\varepsilon}\right)^{2(|\A|+|\B|)}e^{-cn\varepsilon^2}. 
	\end{align*}
	The r.h.s.~of the latter inequality becomes larger than, say, $1/2$ as soon as $n$ is larger than $C\max(|\A|,|\B|)\log (|\rE|/\varepsilon)/\varepsilon^2$, for $C>0$ some universal constant.
	
	Recapitulating, what we have shown so far is that there exists a completely positive map $\cN^{(n)}$ with Kraus rank $n\leq C\max(|\A|,|\B|)\log (|\rE|/\varepsilon)/\varepsilon^2$, for $C>0$ some universal constant, such that,
	\begin{equation} \label{eq:CP} \forall\ \rho\in\cD(\A),\ -\varepsilon\left(\cN(\rho) +\frac{\Id}{|\B|}\right) \leq \cN^{(n)}(\rho)-\cN(\rho)\leq \varepsilon\left(\cN(\rho) +\frac{\Id}{|\B|}\right). \end{equation}
	In particular, Equation \eqref{eq:CP} implies that, for any $\rho\in\cD(\A)$, $\left|\tr\left(\cN^{(n)}(\rho)\right) -1\right|\leq 2\varepsilon$, so that $\cN^{(n)}$ is almost trace preserving, up to an error $2\varepsilon$. As a consequence of Equation \eqref{eq:CP}, we also have
	\begin{equation} \label{eq:1-norm} \forall\ \rho\in\cD(\A),\ \left\|\cN^{(n)}(\rho)-\cN(\rho)\right\|_1 \leq 2\varepsilon, \end{equation}
	and to get only such trace-norm approximation, it is actually possible to impose that $\cN^{(n)}$ is strictly trace preserving. Indeed, denote by $\{K_1,\ldots,K_n\}$ a set of Kraus operators for $\cN^{(n)}$, and set $S=\sum_{i=1}^nK_i^{\dagger}K_i$. Equation \eqref{eq:1-norm} guarantees that $\|S-\Id\|_{\infty}\leq 2\varepsilon$, so that $S$ is in particular invertible, as soon as $\varepsilon<1/2$. 
	Hence, consider the completely positive map $\widehat{\cN}^{(n)}$ having $\{K_1S^{-1/2},\ldots,K_nS^{-1/2}\}$ as a set of Kraus operators, which means that $\widehat{\cN}^{(n)}(\cdot)=\cN^{(n)}(S^{-1/2}\,\cdot\, S^{-1/2})$. The latter is trace preserving by construction, and such that
	\begin{equation} \label{eq:TP-approx} \forall\ \rho\in\cD(\A),\ \left\|\widehat{\cN}^{(n)}(\rho)-\cN(\rho)\right\|_1\leq \left\|\widehat{\cN}^{(n)}(\rho)-\cN^{(n)}(\rho)\right\|_1 + \left\|\cN^{(n)}(\rho)-\cN(\rho)\right\|_1 \leq 8\varepsilon. \end{equation}
	Indeed, for any $\rho\in\cD(\A)$, we have the chain of inequalities
	\[ \|\rho-S^{1/2}\rho S^{1/2}\|_1\leq \left(\|S^{1/2}\|_{\infty}+\|\Id\|_{\infty}\right) \|\rho\|_1 \|S^{1/2}-\Id\|_{\infty} \leq (1+\varepsilon+1)\,2\varepsilon\leq 6\varepsilon, \]
	where the first inequality follows from the triangle and H\"{o}lder inequalities (after simply noticing that, setting $\Delta=\Id-S^{1/2}$, we can rewrite $\rho-S^{1/2}\rho S^{1/2}$ as $\Delta\rho\Id + S^{1/2}\rho\Delta$), while the second inequality is because, for any $0<x<1/2$, $(1+2x)^{1/2}\leq 1+x$ and $(1-2x)^{1/2}\geq 1-2x$, so that $\|S^{1/2}\|_{\infty}\leq 1+\varepsilon$ and $\|\Id-S^{1/2}\|_{\infty}\leq 2\varepsilon$. This implies that, for any $\rho\in\cD(\A)$,
	\begin{equation} \label{eq:contraction} \left\|\widehat{\cN}^{(n)}(\rho)-\cN^{(n)}(\rho)\right\|_1 = \left\|\widehat{\cN}^{(n)}\left(\rho-S^{1/2}\rho S^{1/2}\right)\right\|_1 \leq \left\|\rho-S^{1/2}\rho S^{1/2}\right\|_1 \leq 6\varepsilon, 
	\end{equation}
	where the first inequality is because $\widehat{\cN}^{(n)}$ is a CPTP map, and hence has $(1\to 1)$-norm equal to $1$.
	Combining \eqref{eq:contraction} and \eqref{eq:1-norm}, we get the last inequality in \eqref{eq:TP-approx}.
    
	This concludes the proof of Theorem \ref{th:main} and of Remark \ref{remark:schatten} following it.
\end{proof}

\begin{proof}[Proof of Theorem \ref{th:main'}]
	By extremality of pure states amongst all states, it follows from Proposition \ref{prop:forall'} that there exists a universal constant $c>0$ such that, for any $\varepsilon>0$,
	\[ \P\left( \sup_{\rho\in\cD(\A)} \left\| \cN_{\varphi^{(n)}}(\rho) -\cN(\rho) \right\|_{\infty} \leq \varepsilon \sup_{\rho\in\cD(\A)} \left\|\cN(\rho)\right\|_{\infty} \right) \geq 1 - 225^{|\A|+|\B|}e^{-cn\varepsilon^2}. \]
	The r.h.s.~of the latter inequality becomes larger than, say, $1/2$ as soon as $n$ is larger than $C\max(|\A|,|\B|)/\varepsilon^2$, for $C>0$ some universal constant. And the proof of Theorem \ref{th:main'} is thus complete.
\end{proof}

\begin{remark} \label{rem:Kraus}
The way we construct a random quantum channel approximating with high probability a quantum channel of interest is by starting from its Stinespring representation. Indeed, as briefly explained in the introduction, starting from one of its Kraus representations is a priori not as convenient, because of their non uniqueness. However, our random construction can be rephrased in terms of Kraus operators, in a way that does not depend on the chosen representation. The correspondence is as follows: Assume that our CPTP map $\mathcal{N}:\cL(\A)\rightarrow\cL(\B)$ can be written in the Stinespring and Kraus pictures, respectively, as
\[ \mathcal{N}(X) = \tr_{\rE}\left(VXV^{\dagger}\right) = \sum_{i=1}^{|\rE|} K_iXK_i^{\dagger}. \]
Then, given a unit vector $\varphi\in\rE$, we define in Equation \eqref{eq:N_phi} the CP map $\cN_{\varphi}:\cL(\A)\rightarrow\cL(\B)$ by
\[ \cN_{\varphi}(X)= \tr_{\rE}\left(V_{\varphi} X V_{\varphi}^{\dagger} \right),\ \text{where}\ V_{\varphi}=\sqrt{|\rE|}\left(\Id\otimes\varphi\right) V. \] 
Now, the latter can be equivalently defined by
\[ \cN_{\varphi}(X)= K_{\varphi} X K_{\varphi}^{\dagger},\ \text{where}\ K_{\varphi}=\sqrt{|\rE|}\sum_{i=1}^{|\rE|} \varphi_iK_i. \]
Hence, sampling Kraus operators at random from the set $\{K_1,\ldots,K_{|\rE|}\}$ does not work in general (and it is not even clear how to chose a representation for which it would), but sampling them as random weighted sums from the set $\{K_1,\ldots,K_{|\rE|}\}$ does work (whatever the chosen representation).

This is in contrast with the case of the fully randomizing channel, studied in \cite{HLSW} and \cite{Aubrun}. For the latter there is a well-identified distribution to sample Kraus operators from (namely Haar-distributed unitaries), and even a distinguished Kraus decomposition to directly sub-sample Kraus operators from (namely that built from generalized Pauli shift and phase operators). 
\end{remark}

\section{Consequences and applications}
\label{sec:applications}

\subsection{Approximation in terms of output entropies or fidelities} \hfill\smallskip

This section gathers some (more or less straightforward) corollaries of Theorem \ref{th:main} concerning approximation of quantum channels in other distance measures than the $(1{\rightarrow}1)$-norm distance mostly studied up to now.

Given a state $\varrho$ on some Hilbert space $\H$, we define, for any $p\in]1,\infty[$, its R\'{e}nyi entropy of order $p$ as
\[ S_p(\varrho)=-\frac{p}{p-1}\log\|\varrho\|_p, \]
and the latter definition is extended by continuity to $p\in\{1,\infty\}$ as
\[ S_1(\varrho)=S(\varrho)=-\tr(\varrho\log\varrho)\ \text{and}\ S_{\infty}(\varrho)=-\log\lambda_{\max}(\varrho). \]
R\'{e}nyi $p$-entropies thus measure the amount of information present in a quantum state, generalizing the case $p=1$ of the von Neumann entropy. Besides, given states $\rho,\sigma$ on some Hilbert space $\H$, their fidelity is defined as $F(\rho,\sigma) = \|\sqrt{\rho}\sqrt{\sigma}\|_1$.

Now, given a channel $\cN$, from some input Hilbert space $\A$ to some output Hilbert space $\B$, it is important to understand quantities such as its minimum output R\'{e}nyi $p$-entropy, i.e.
\[ S_p^{\min}(\cN)=\min_{\rho\in\cD(\A)} S_p\big(\cN(\rho)\big), \] 
or its maximum output fidelity with a fixed state $\sigma$ on $\B$, i.e.
\[ F^{\max}(\cN,\sigma)=\max_{\rho\in\cD(\A)} F\big(\cN(\rho),\sigma\big). \] 
Indeed, in quantum Shannon theory, these are relevant for their own sake in the asymptotic memoryless setting, while smoothed versions of them show up in the one-shot setting.
It is thus of interest to have a channel $\widehat{\cN}$ which is less complex than $\cN$ but nevertheless shares approximately the same $S_p^{\min}$ and $F^{\max}(\cdot,\sigma)$.
What is more, as we will see shortly, channel approximation results in terms of output von Neumann entropy (and perhaps also in terms of other entropies or fidelities) could be the key to establishing channel-dependent lower bounds on the achievable Kraus rank reduction.

\begin{proposition} \label{prop:S_p}
Let $\cN:\cL(\A)\rightarrow\cL(\B)$ be a CPTP map, and assume that the CP map $\widehat{\cN}:\cL(\A)\rightarrow\cL(\B)$ satisfies
\begin{equation} \label{eq:order} \forall\ \varrho\in\cD(\A),\ (1-\varepsilon)\cN(\varrho) -\varepsilon\frac{\Id}{|\B|} \leq \widehat{\cN}(\varrho)\leq (1+\varepsilon)\cN(\varrho) +\varepsilon\frac{\Id}{|\B|}, \end{equation}
for some $0<\varepsilon<1/2$. Then, for any $p\in]1,\infty]$, $\widehat{\cN}$ is close to $\cN$ in terms of output $p$-entropies, in the sense that
\[ \forall\ \varrho\in\cD(\A),\ \left| S_p\big(\widehat{\cN}(\varrho)\big)-S_p\big(\cN(\varrho)\big) \right| \leq \frac{p}{p-1}4\varepsilon. \]
%\[ \forall\ \varrho\in\cD(\A),\ S_p\big(\cN(\varrho)\big) - \frac{p}{p-1}2\varepsilon \leq S_p\big(\widehat{\cN}(\varrho)\big) \leq S_p\big(\cN(\varrho)\big) + \frac{p}{p-1}4\varepsilon. \]
\end{proposition}

\begin{proof}
Setting $\sigma=\cN(\varrho)$, $\widehat{\sigma}=\widehat{\cN}(\varrho)$ and $\tau=\Id/|\B|$, we can re-write Equation \eqref{eq:order} as the two inequalities
\[ \widehat{\sigma}\leq (1+\varepsilon)\sigma +\varepsilon\tau\ \text{and}\ \sigma\leq \frac{1}{1-\varepsilon}\widehat{\sigma} +\frac{\varepsilon}{1-\varepsilon}\tau \leq (1+2\varepsilon)\widehat{\sigma} +2\varepsilon\tau. \]
By operator monotonicity and the triangle inequality for $\|\cdot\|_p$, these imply the two estimates
\begin{equation} \label{eq:p} \|\widehat{\sigma}\|_p \leq (1+\varepsilon)\|\sigma\|_p +\varepsilon\|\tau\|_p\ \text{and}\ \|\sigma\|_p \leq (1+2\varepsilon)\|\widehat{\sigma}\|_p +2\varepsilon\|\tau\|_p. \end{equation}
Now, from the first inequality in Equation \eqref{eq:p}, we get
\[ \log\|\widehat{\sigma}\|_p \leq \log\|\sigma\|_p + \log(1+\varepsilon) + \frac{\varepsilon}{1+\varepsilon}\frac{\|\tau\|_p}{\|\sigma\|_p} \leq \log\|\sigma\|_p + 2\varepsilon, \]
where we used first that $\log$ is non-decreasing, then twice that $\log(1+x)\leq x$, and finally that $\|\sigma\|_p\leq \|\tau\|_p$. Similarly, we derive from the second inequality in Equation \eqref{eq:p} that
\[ \log\|\sigma\|_p  \leq \log\|\widehat{\sigma}\|_p + 4\varepsilon. \]
Multiplying the two previous inequalities by $-p/(p-1)<0$, we eventually obtain
\[ S_p(\widehat{\sigma}) \geq S_p(\sigma) -\frac{p}{p-1}2\varepsilon\ \text{and}\ S_p(\sigma) \geq S_p(\widehat{\sigma}) -\frac{p}{p-1}4\varepsilon. \]
The conclusion of Proposition \ref{prop:S_p} then follows.
\end{proof}

\begin{proposition} \label{prop:S}
Let $\cN:\cL(\A)\rightarrow\cL(\B)$ be a CPTP map, and assume that the CP map $\widehat{\cN}:\cL(\A)\rightarrow\cL(\B)$ satisfies
\begin{equation} \label{eq:approx} \forall\ \varrho\in\cD(\A),\ \left\|\widehat{\cN}(\varrho)-\cN(\varrho)\right\|_1 \leq \frac{2\varepsilon}{\log|\B|}, \end{equation}
for some $0<\varepsilon<1/2$. Then, $\widehat{\cN}$ is close to $\cN$ in terms of output entropies, in the sense that
\[ \forall\ \varrho\in\cD(\A),\ \left| S\big(\widehat{\cN}(\varrho)\big)-S\big(\cN(\varrho)\big) \right| \leq 4\sqrt{\varepsilon}. \]
%\[ \forall\ \varrho\in\cD(\A),\ \left| S\big(\widehat{\cN}(\varrho)\big) - S\big(\cN(\varrho)\big) \right| \leq \varepsilon + \frac{2\varepsilon}{\log|\B|} + \sqrt{\frac{\varepsilon}{\log|\B|}}. \]
\end{proposition}

\begin{proof}
By Fannes-Audenaert inequality \cite{Audenaert} (see also \cite{AW:entropy} for 
a streamlined proof), Equation \eqref{eq:approx} implies that
\[ \left| S\big(\widehat{\cN}(\varrho)\big) - S\big(\cN(\varrho)\big) \right| \leq \varepsilon - \frac{\varepsilon}{\log|\B|}\log \left(\frac{\varepsilon}{\log|\B|}\right) - \left(1-\frac{\varepsilon}{\log|\B|}\right)\log\left(1-\frac{\varepsilon}{\log|\B|}\right). \]
Now, for any $0<x<1/2$, on the one hand $x\log (1/x)\leq \sqrt{x}$, and on the other hand $\log(1/(1-x))\leq\log(1+2x)\leq 2x$ so that $(1-x)\log(1/(1-x))\leq 2x$. Hence,
\[ \left| S\big(\widehat{\cN}(\varrho)\big) - S\big(\cN(\varrho)\big) \right| \leq \varepsilon + \frac{2\varepsilon}{\log|\B|} + \sqrt{\frac{\varepsilon}{\log|\B|}}, \]
and the conclusion of Proposition \ref{prop:S} follows.
\end{proof}

\begin{theorem} \label{th:entropies}
Fix $0<\varepsilon<1$ and let $\cN:\cL(\A)\rightarrow\cL(\B)$ be a CPTP map with Kraus rank $|\rE|\geq |\A|,|\B|$. Then, there exists a CP map $\widehat{\cN}:\cL(\A)\rightarrow\cL(\B)$ with Kraus rank at most $C\max(|\A|,|\B|)\log(|\rE|/\varepsilon)/\varepsilon^2$ (where $C>0$ is a universal constant) and such that
\[ \forall\ p\in]1,\infty],\ \forall\ \varrho\in\cD(\A),\  \left| S_p\big(\widehat{\cN}(\varrho)\big)-S_p\big(\cN(\varrho)\big) \right| \leq \frac{p}{p-1}\varepsilon. \]
Besides, there also exists a CPTP map $\widehat{\cN}':\cL(\A)\rightarrow\cL(\B)$ with Kraus rank at most $C\max(|\A|,|\B|)\log^5(|\rE|/\varepsilon^2)/\varepsilon^4$ (where $C>0$ is a universal constant) and such that
\begin{equation} \label{eq:approx-S} \forall\ \varrho\in\cD(\A),\ \left| S\big(\widehat{\cN}'(\varrho)\big)-S\big(\cN(\varrho)\big) \right| \leq \varepsilon. \end{equation}
\end{theorem}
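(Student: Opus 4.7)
The plan is to combine Theorem \ref{th:main} (or, for the second assertion, its $(1{\rightarrow}1)$-norm/CPTP version from Remark \ref{remark:schatten}) with the already-proved Propositions \ref{prop:S_p} and \ref{prop:S}, which convert, respectively, operator-ordering and trace-norm closeness into closeness of output R\'enyi or von Neumann entropies. No new probabilistic ingredient is needed; the work is entirely in choosing the right input parameter.

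For the first assertion, I would apply Theorem \ref{th:main} with parameter $\varepsilon/4$. The resulting CP map $\widehat{\cN}$ has Kraus rank of the announced order $\max(|\A|,|\B|)\log(|\rE|/\varepsilon)/\varepsilon^2$ (absorbing $4$ into the constant $C$), and inequality \eqref{eq:approxmain} rearranges to
\[ \Bigl(1-\tfrac{\varepsilon}{4}\Bigr)\cN(\varrho) + \tfrac{\varepsilon}{4}\frac{\Id}{|\B|} \,\leq\, \widehat{\cN}(\varrho) \,\leq\, \Bigl(1+\tfrac{\varepsilon}{4}\Bigr)\cN(\varrho) + \tfrac{\varepsilon}{4}\frac{\Id}{|\B|}, \]
which is at least as strong as the hypothesis \eqref{eq:order} of Proposition \ref{prop:S_p} with parameter $\varepsilon/4$. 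That proposition then directly yields $|S_p(\widehat{\cN}(\varrho)) - S_p(\cN(\varrho))| \leq \tfrac{p}{p-1}\cdot 4\cdot \tfrac{\varepsilon}{4} = \tfrac{p}{p-1}\varepsilon$, uniformly in $\varrho\in\cD(\A)$ and $p\in(1,\infty]$.

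For the second assertion, the target is CPTP, so I would invoke the trace-preserving $(1{\rightarrow}1)$-norm version from Remark \ref{remark:schatten}: for any $\delta\in(0,1)$ one obtains a CPTP $\widehat{\cN}'$ with $\sup_\varrho\|\widehat{\cN}'(\varrho)-\cN(\varrho)\|_1\leq \delta$ and Kraus rank at most $C\max(|\A|,|\B|)\log(|\rE|/\delta)/\delta^2$. Proposition \ref{prop:S} converts trace-norm error $2\varepsilon_0/\log|\B|$ into entropy error $4\sqrt{\varepsilon_0}$; equating $4\sqrt{\varepsilon_0}=\varepsilon$ forces $\varepsilon_0=\varepsilon^2/16$ and hence $\delta\sim \varepsilon^2/\log|\B|$. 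Using $|\B|\leq|\rE|$ to bound $\log|\B|\leq\log(|\rE|/\varepsilon^2)$, the resulting Kraus rank is at most
\[ C'\max(|\A|,|\B|)\,\frac{\log\bigl(|\rE|\log|\B|/\varepsilon^2\bigr)\,\log^2|\B|}{\varepsilon^4} \,\leq\, C''\max(|\A|,|\B|)\,\frac{\log^{3}(|\rE|/\varepsilon^2)}{\varepsilon^4}, \]
which fits comfortably inside the $\log^{5}$ budget of the statement.

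The only real obstacle is bookkeeping: the parameter reductions (the factor $\varepsilon/4$ in part one, the factor $\varepsilon^2/\log|\B|$ in part two) must be tracked to keep the final Kraus-rank bound in the advertised form, and in part two one must use the CPTP version of Theorem \ref{th:main} (from Remark \ref{remark:schatten}) rather than the bare CP version, since the theorem explicitly asks for a trace-preserving $\widehat{\cN}'$. The $\log^{2}|\B|$ factor inherent to the dimension dependence of Fannes--Audenaert in Proposition \ref{prop:S} is what pushes the exponent from $\log$ in part one to $\log^3$ (and \emph{a fortiori} $\log^5$) in part two; improving this further would demand a dimension-free continuity bound for the von Neumann entropy, which is not available.
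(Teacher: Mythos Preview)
Your proposal is correct and follows exactly the approach the paper takes: the paper's own proof is a single sentence stating that Theorem \ref{th:entropies} is a direct consequence of Theorem \ref{th:main} combined with Propositions \ref{prop:S_p} and \ref{prop:S}. You have filled in the parameter bookkeeping correctly (including the use of the CPTP refinement from Remark \ref{remark:schatten} for the second part), and your observation that the second assertion actually only needs a $\log^3(|\rE|/\varepsilon^2)$ factor rather than the stated $\log^5$ is accurate.
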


\begin{proof}
This is a direct consequence of Theorem \ref{th:main}, combined with Propositions \ref{prop:S_p} and \ref{prop:S}.
\end{proof}

We already argued about optimality in Theorem \ref{th:main}, showing that there exist CPTP maps $\cN:\cL(\A)\rightarrow\cL(\B)$ for which at least $|\A|$ or $|\B|$ Kraus operators are needed to approximate them in the sense of Equation \eqref{eq:approxmain}. We will now establish that, even to get the weaker notion of approximation of Equation \eqref{eq:approx-S}, a Kraus rank of at least $|\A|$ or $|\B|$ might, in some cases, still be necessary.

Let $\cN:X\in\cL(\A)\mapsto\tr_{\mathrm{E}}(VX V^{\dagger})\in\cL(\B)$ be a CPTP map with isometry $V:\A\hookrightarrow\B\otimes\rE$. Given $\varrho\in\cD(\A)$, we consider its input entropy $S(\varrho)$, its output entropy $S\left(\cN(\varrho)\right)$, and its entropy exchange $S\left(\varrho,\cN\right)$. The latter quantity is defined as follows: let $\varphi_{A'A}$ be an extension of $\varrho_A$, $\widetilde{\varphi}_{A'BE}=\left(\Id_{A'}\otimes V_{A\rightarrow BE}\right)\varphi_{A'A}$, and set
\[ S\left(\varrho_A,\cN_{A\rightarrow B}\right)= S\left(\tr_{\rE}\widetilde{\varphi}_{A'BE}\right)= S\left(\tr_{\A'\B}\widetilde{\varphi}_{A'BE}\right). \]
By non-negativity of the loss and the noise of a quantum channel, we then have (see \cite{GH}, Section 4.5)
\[ \forall\ \varrho\in\cD(\A),\ \left|S(\varrho)-S(\cN(\varrho))\right| \leq S(\varrho,\cN). \]
Yet, for any $\varrho\in\cD(\A)$, obviously $S(\varrho,\cN)\leq\log|\rE|$. And hence as a consequence,
\[ \log|\rE| \geq \max\left\{ \left|S(\varrho)-S(\cN(\varrho))\right| \st \varrho\in\cD(\A) \right\}. \]
In particular, we may derive the two following lower bounds on $|\rE|$, for certain CPTP maps $\cN$,
\begin{equation} \label{eq:S-exchange} \exists\ \psi_{\A}:\ \cN\left(\psi_A\right)=\frac{\Id_B}{|\B|}\ \Longrightarrow\ |\rE|\geq|\B|\ \text{and}\ \exists\ \psi_B:\ \cN\left(\frac{\Id_A}{|\A|}\right)=\psi_B\ \Longrightarrow\ |\rE|\geq|\A|\ . \end{equation}
And this remains approximately true for an approximation of $\cN$. Concretely, let $\widehat{\cN}:\cL(\A)\rightarrow\cL(\B)$ be a CPTP map such that
\[ \forall\ \varrho\in\cD(\A),\ \left| S\big(\widehat{\cN}(\varrho)\big)-S\big(\cN(\varrho)\big) \right| \leq \varepsilon. \]
If $\cN$ satisfies the first condition in Equation \eqref{eq:S-exchange}, then
\[ S\big(\widehat{\cN}\left(\psi_A\right)\big) \geq S\big(\cN\left(\psi_A\right)\big) -\varepsilon = \log|\B| - \varepsilon,\]
which implies that 
\[ \log r_K\big(\widehat{\cN}\big) \geq \log|\B|-\varepsilon,\ \text{i.e.}\ r_K\big(\widehat{\cN}\big) \geq e^{-\varepsilon}|\B|.\]
And if $\cN$ satisfies the second condition in Equation \eqref{eq:S-exchange}, then
\[ S\left(\widehat{\cN}\left(\frac{\Id_A}{|\A|}\right)\right) \leq S\left(\cN\left(\frac{\Id_A}{|\A|}\right)\right) +\varepsilon = \varepsilon,\]
which implies that
\[\log r_K\big(\widehat{\cN}\big) \geq \log|\A|-\varepsilon,\ \text{i.e.}\ r_K\big(\widehat{\cN}\big) \geq e^{-\varepsilon}|\A|.\]
Hence, the conclusion of this study is that, in Theorem \ref{th:entropies}, $r_K\big(\widehat{\cN}\big)\geq(1-\varepsilon)\max(|\A|,|\B|)$ is for sure necessary, in general, to have the entropy approximation \eqref{eq:approx-S}. It additionally tells us that there is a channel-dependent lower bound on $r_K\big(\widehat{\cN}\big)$ so that the latter holds (and hence even more so that the stronger notion of approximation in $(1{\rightarrow}1)$-norm holds), namely
\begin{equation} \label{eq:r_K-hat} \log r_K\big(\widehat{\cN}\big) \geq (1-\varepsilon) \max\left\{ \left|S(\varrho)-S(\cN(\varrho))\right| \st \varrho\in\cD(\A) \right\}. \end{equation}

Let us point out though that what we established here is optimality of our results only in the sense that there exist some quantum channels for which $\max(|\A|,|\B|)$ is necessary as approximating Kraus rank. It is however a more subtle question to find, for each given quantum channel $\cN$, what is the optimal approximating Kraus rank $\hat{r}_K(\cN)$. We leave this issue open, but in view of Equation \eqref{eq:r_K-hat}, a possible conjecture could be that 
\[ \log \hat{r}_K(\cN) \simeq \max \left\{ \left|S(\varrho)-S(\cN(\varrho))\right| \st \varrho\in\cD(\A) \right\}. \]
We use the occasion to also formulate an information-theoretic version of this question, namely: we wish to $\varepsilon$-approximate the channel $\cN^{\otimes n}$ (for concreteness, say in $(1{\rightarrow}1)$-norm) by one of minimum Kraus rank $\hat{r}_K(\cN,n,\varepsilon)$, and we would like to determine the value of
\[ R(\cN) := \underset{\varepsilon>0}{\sup}\, \underset{n\rightarrow+\infty}{\limsup}\, \frac{1}{n}\, \log \hat{r}_K(\cN,n,\varepsilon). \]
The latter quantity has the natural operational interpretation, as the minimum rate of qubits  needed in the environment, per channel realisation, to approximate many copies of the channel. One could thus hope to get information theoretic lower and upper bounds on it. The above reasoning shows, for instance, that
\[
  R(\cN) \geq \max \left\{ \left|S(\varrho)-S(\cN(\varrho))\right| \st \varrho\in\cD(\A) \right\}, 
\]
but we will not develop this notion further in the present paper.

\begin{proposition} \label{prop:F}
Let $\cN:\cL(\A)\rightarrow\cL(\B)$ be a CPTP map, and assume that the CP map $\widehat{\cN}:\cL(\A)\rightarrow\cL(\B)$ satisfies
\begin{equation} \label{eq:order'} \forall\ \varrho\in\cD(\A),\ (1-\varepsilon)\cN(\varrho) -\varepsilon\frac{\Id}{|\B|} \leq \widehat{\cN}(\varrho)\leq (1+\varepsilon)\cN(\varrho) +\varepsilon\frac{\Id}{|\B|},  \end{equation}
for some $0<\varepsilon<1/2$. Then, $\widehat{\cN}$ is close to $\cN$ in terms of output fidelities, in the sense that
\[ \forall\ \varrho\in\cD(\A),\forall\ \omega\in\cD(\B),\ \left| F\big(\widehat{\cN}(\varrho),\omega\big) - F\big(\cN(\varrho),\omega\big) \right| \leq \frac{3}{\sqrt{2}}\sqrt{\varepsilon}. \]
\end{proposition}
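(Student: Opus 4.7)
The plan is to reduce the question to a trace-norm estimate on $\widehat{\cN}(\varrho) - \cN(\varrho)$, and then convert it into a square-root estimate via operator monotonicity.

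First, equation \eqref{eq:order'} involves bounding operators of the form $\varepsilon\,\cN(\varrho) + \varepsilon\,\Id/|\B|$, which are positive with trace $2\varepsilon$; hence, exactly as in Remark \ref{remark:schatten}, one extracts $\|\widehat{\cN}(\varrho) - \cN(\varrho)\|_1 \leq 2\varepsilon$. Set $\hat\sigma := \widehat{\cN}(\varrho)$ and $\sigma := \cN(\varrho)$. The reverse triangle inequality for $\|\cdot\|_1$ gives
\[ \big|F(\hat\sigma, \omega) - F(\sigma, \omega)\big| = \Big|\big\|\sqrt{\hat\sigma}\sqrt{\omega}\big\|_1 - \big\|\sqrt{\sigma}\sqrt{\omega}\big\|_1\Big| \leq \big\|\big(\sqrt{\hat\sigma} - \sqrt{\sigma}\big)\sqrt{\omega}\big\|_1, \]
and H\"older's inequality with $p = q = 2$ then bounds the right-hand side by $\|\sqrt{\hat\sigma} - \sqrt\sigma\|_2 \cdot \|\sqrt\omega\|_2 = \|\sqrt{\hat\sigma} - \sqrt\sigma\|_2$ (since $\tr\omega = 1$).

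The last ingredient is the Powers-Stormer inequality $\|\sqrt A - \sqrt B\|_2^2 \leq \|A - B\|_1$, which holds for arbitrary positive operators $A, B$ (no normalization needed). Applied to $A = \hat\sigma$ and $B = \sigma$ it yields $\|\sqrt{\hat\sigma} - \sqrt\sigma\|_2 \leq \sqrt{2\varepsilon}$, so that $|F(\hat\sigma, \omega) - F(\sigma, \omega)| \leq \sqrt{2\varepsilon}$, which is even a little stronger than the announced $\frac{3}{\sqrt 2}\sqrt\varepsilon$.

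The one point requiring care is that $\widehat{\cN}$ is assumed only CP, so $\hat\sigma$ is positive but in general not a state. This is harmless: each of the reverse-triangle, H\"older and Powers-Stormer steps works for arbitrary positive operators. A natural alternative route would combine the Fuchs-van de Graaf inequality $1 - F(\hat\sigma, \sigma) \leq \tfrac{1}{2}\|\hat\sigma - \sigma\|_1$ with the triangle inequality for the Bures metric $B(\rho, \sigma) = \sqrt{2 - 2F(\rho, \sigma)}$, converting back from $B$ to $F$ via $|F(\hat\sigma,\omega) - F(\sigma,\omega)| = \tfrac{1}{2}|B(\hat\sigma,\omega) - B(\sigma,\omega)|(B(\hat\sigma,\omega) + B(\sigma,\omega))$; that argument also works (with a bit of extra care about the non-normalization of $\hat\sigma$), gives roughly $2\sqrt\varepsilon$, and is plausibly what yields exactly the $\tfrac{3}{\sqrt 2}\sqrt\varepsilon$ constant stated in the proposition.
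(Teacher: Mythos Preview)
Your proof is correct and in fact gives the sharper constant $\sqrt{2\varepsilon}$ in place of $\tfrac{3}{\sqrt 2}\sqrt\varepsilon$. The route, however, is genuinely different from the paper's. The paper never passes through the trace-norm bound $\|\hat\sigma-\sigma\|_1\leq 2\varepsilon$; instead it exploits the operator ordering \eqref{eq:order'} directly. Writing $\tau=\Id/|\B|$, it rearranges \eqref{eq:order'} as $\hat\sigma\leq(1+\varepsilon)\sigma+\varepsilon\tau$ and $\sigma\leq(1+2\varepsilon)\hat\sigma+2\varepsilon\tau$, then uses that $A\mapsto F(A,\omega)=\tr\sqrt{\omega^{1/2}A\,\omega^{1/2}}$ is operator-monotone and trace-subadditive (i.e.\ $F(A+B,\omega)\leq F(A,\omega)+F(B,\omega)$), together with the homogeneity $F(\lambda A,\omega)=\sqrt\lambda\,F(A,\omega)$ and the bound $F(\cdot,\omega)\leq 1$ on states. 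This yields two one-sided estimates whose worst case is $\varepsilon+\sqrt{2\varepsilon}\leq\tfrac{3}{\sqrt 2}\sqrt\varepsilon$.

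What each approach buys: your argument is shorter, uses only the scalar consequence $\|\hat\sigma-\sigma\|_1\leq 2\varepsilon$ of \eqref{eq:order'}, and relies on the single clean tool Powers--St{\o}rmer; it also yields the better constant. The paper's argument, by contrast, keeps track of the two-sided operator sandwich and would in principle adapt to asymmetric hypotheses (different $\varepsilon$'s on the two sides, or a replacement for $\Id/|\B|$), at the price of invoking the less commonly stated subadditivity of $\tr\sqrt{\cdot}$. Your aside about the Bures-metric route is plausible but not what the paper does.
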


\begin{proof}
As noted in the proof of Proposition \ref{prop:S_p}, setting $\sigma=\cN(\varrho)$, $\widehat{\sigma}=\widehat{\cN}(\varrho)$ and $\tau=\Id/|\B|$, we can re-write Equation \eqref{eq:order'} as the two inequalities $\widehat{\sigma}\leq (1+\varepsilon)\sigma +\varepsilon\tau$ and $\sigma\leq (1+2\varepsilon)\widehat{\sigma} +2\varepsilon\tau$.
By operator monotonicity of $F(\cdot,\omega)$, and the fact that it is upper bounded by $1$, these imply the two estimates
\[ F(\widehat{\sigma},\omega) \leq \sqrt{1+\varepsilon}F(\sigma,\omega) +\sqrt{\varepsilon}F(\tau,\omega) \leq  F(\sigma,\omega) + \frac{\varepsilon}{2} + \sqrt{\varepsilon}, \]
\[ F(\sigma,\omega) \leq \sqrt{1+2\varepsilon}F(\widehat{\sigma},\omega) +\sqrt{2\varepsilon}F\big(\tau,\omega) \leq  F(\widehat{\sigma},\omega) + \varepsilon + \sqrt{2\varepsilon}. \]
Finally, just observing that $\varepsilon\leq \sqrt{\varepsilon/2}$ for $0<\varepsilon<1/2$, the conclusion of Proposition \ref{prop:F} directly follows.
\end{proof}

\begin{theorem} \label{th:fidelitiesies}
Fix $0<\varepsilon<1$ and let $\cN:\cL(\A)\rightarrow\cL(\B)$ be a CPTP map with Kraus rank $|\rE|\geq |\A|,|\B|$. Then, there exists a CP map $\widehat{\cN}:\cL(\A)\rightarrow\cL(\B)$ with Kraus rank at most $C\max(|\A|,|\B|)\log(|\rE|/\varepsilon)/\varepsilon^4$ (where $C>0$ is a universal constant) and such that
\[ \forall\ \varrho\in\cD(\A),\forall\ \omega\in\cD(\B),\ \left| F\big(\widehat{\cN}(\varrho),\omega\big) - F\big(\cN(\varrho),\omega\big) \right| \leq \varepsilon. \]
\end{theorem}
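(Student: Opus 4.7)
The plan is to derive Theorem \ref{th:fidelitiesies} as an immediate combination of the approximation result of Theorem \ref{th:main} with the fidelity estimate of Proposition \ref{prop:F}, exactly in the same spirit as Theorem \ref{th:entropies} was obtained by combining Theorem \ref{th:main} with Propositions \ref{prop:S_p} and \ref{prop:S}.

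Concretely, I would first fix an auxiliary parameter $\varepsilon'>0$ and invoke Theorem \ref{th:main} to produce a CP map $\widehat{\cN}:\cL(\A)\rightarrow\cL(\B)$ whose Kraus rank is at most $C\max(|\A|,|\B|)\log(|\rE|/\varepsilon')/(\varepsilon')^2$ and which satisfies the two-sided operator inequality
\[ -\varepsilon'\left(\cN(\varrho)+\frac{\Id}{|\B|}\right) \leq \widehat{\cN}(\varrho)-\cN(\varrho) \leq \varepsilon'\left(\cN(\varrho)+\frac{\Id}{|\B|}\right) \]
for every $\varrho\in\cD(\A)$. Rewriting this as in equation \eqref{eq:order'}, the hypotheses of Proposition \ref{prop:F} are met (with $\varepsilon'$ in place of $\varepsilon$), and one gets
\[ \left| F\big(\widehat{\cN}(\varrho),\omega\big) - F\big(\cN(\varrho),\omega\big) \right| \leq \frac{3}{\sqrt{2}}\sqrt{\varepsilon'} \]
for all $\varrho\in\cD(\A)$ and $\omega\in\cD(\B)$.

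It then suffices to calibrate $\varepsilon'$ so that the right-hand side equals $\varepsilon$, i.e.\ take $\varepsilon'=\frac{2}{9}\varepsilon^2$. The Kraus rank bound from Theorem \ref{th:main} then becomes
\[ \frac{C\max(|\A|,|\B|)\log(9|\rE|/(2\varepsilon^2))}{(2\varepsilon^2/9)^2} \leq C'\max(|\A|,|\B|)\frac{\log(|\rE|/\varepsilon)}{\varepsilon^4}, \]
for some possibly enlarged universal constant $C'>0$, which matches the statement of Theorem \ref{th:fidelitiesies}. There is no real obstacle here: the only minor bookkeeping is to absorb the constants coming from the calibration $\varepsilon'\sim\varepsilon^2$ into a single universal constant, and to check that $\log(|\rE|/\varepsilon')=\log(9|\rE|/2\varepsilon^2)$ is indeed dominated by a constant multiple of $\log(|\rE|/\varepsilon)$ (using $|\rE|\geq 2$ and $\varepsilon<1$). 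All the conceptual work has already been done in proving Theorem \ref{th:main} and Proposition \ref{prop:F}; Theorem \ref{th:fidelitiesies} is the corresponding corollary for output fidelities.
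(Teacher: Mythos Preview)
Your proposal is correct and follows exactly the paper's approach: the paper's proof of Theorem~\ref{th:fidelitiesies} is the single sentence ``This is a direct consequence of Theorem~\ref{th:main}, combined with Proposition~\ref{prop:F}.'' You have simply unpacked that sentence, carrying out explicitly the calibration $\varepsilon'=\tfrac{2}{9}\varepsilon^2$ and the resulting bookkeeping on the Kraus-rank bound.
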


\begin{proof}
This is a direct consequence of Theorem \ref{th:main}, combined with Proposition \ref{prop:F}.
\end{proof}

\subsection{Destruction of correlations with few resources and data hiding.} \hfill\smallskip

It was observed in \cite{HLSW}, Section 3, that an $\varepsilon$-randomizing channel (i.e.~a channel which is an $\varepsilon$-approximation of the fully randomizing channel) approximately destroys the correlations between the system it acts on and any system the latter might be coupled to, in the following two senses: First of all, a state which is initially just classically correlated becomes almost uncorrelated (or in other words any separable state is sent close to a product state, in $1$-norm distance). And secondly, whatever the initial state, the correlations present in it become almost invisible to local observers (or in other words any state is sent to close to a product state, in one-way-LOCC-norm). Hence, having an $\varepsilon$-randomizing channel with few Kraus operators can be seen as having an efficient way to decouple a system of interest from its environment. Thanks to Theorem \ref{th:main}, we can generalize these results to Theorem \ref{th:correlations} below.

\begin{theorem} \label{th:correlations}
Let $\A,\B,\mathrm{C}$ be Hilbert spaces, and assume that $d=\max(|\A|,|\B|)<+\infty$. For any $0<\varepsilon<1$ and $\sigma_B^*\in\mathcal{D}(\B)$, there exists a CPTP map $\widehat{\mathcal{N}}:\mathcal{L}(\A)\rightarrow\mathcal{L}(\B)$ with Kraus rank at most $Cd\log(d/\varepsilon)/\varepsilon^2$ (where $C>0$ is a universal constant) and such that
\begin{equation} \label{eq:correlations1} \forall\ \varrho_{AC}\in\mathcal{S}(\A:\mathrm{C}),\ \left\|\widehat{\mathcal{N}}\otimes\mathcal{I}(\varrho_{AC})-\sigma_B^*\otimes\varrho_C\right\|_1 \leq \varepsilon, \end{equation}
\begin{equation} \label{eq:correlations2} \forall\ \varrho_{AC}\in\mathcal{D}(\A\otimes\mathrm{C}),\ \left\|\widehat{\mathcal{N}}\otimes\mathcal{I}(\varrho_{AC})-\sigma_B^*\otimes\varrho_C\right\|_{\mathbf{LOCC^{\rightarrow}(\B:\mathrm{C})}} \leq \varepsilon. \end{equation}
\end{theorem}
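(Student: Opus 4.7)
The strategy is to reduce Theorem \ref{th:correlations} to Theorem \ref{th:main} applied to the ``generalised randomising'' channel whose output is constant and equal to $\sigma_B^*$, and then to extract each of the two desired bounds from the operator-order approximation it provides, the second one by a duality argument.

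First I introduce the ideal channel $\cN_0 : \cL(\A) \to \cL(\B)$, $\cN_0(X) = \tr(X)\,\sigma_B^*$. Spectrally decomposing $\sigma_B^* = \sum_k p_k \proj{k}$, the operators $\{\sqrt{p_k}\,\ketbra{k}{i}\}_{i,k}$ furnish a Kraus representation of $\cN_0$ of size $|\A|\cdot\rk(\sigma_B^*) \leq |\A||\B|\leq d^2$. Applying Theorem \ref{th:main} to $\cN_0$ yields a map $\widehat{\cN}:\cL(\A)\rightarrow\cL(\B)$, which can be arranged to be CPTP via the modification at the end of the proof of Theorem \ref{th:main} (preserving an operator inequality of the same shape with constants inflated only by a harmless fixed factor), having Kraus rank at most $Cd\log(d/\varepsilon)/\varepsilon^2$ and such that, for every $\rho\in\cD(\A)$,
\[ -\varepsilon\bigl(\sigma_B^* + \Id/|\B|\bigr) \leq \widehat{\cN}(\rho) - \sigma_B^* \leq \varepsilon\bigl(\sigma_B^* + \Id/|\B|\bigr). \]

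The separable bound \eqref{eq:correlations1} is now almost immediate. Remark \ref{remark:schatten} turns the above operator inequality into $\|\widehat{\cN}(\rho) - \sigma_B^*\|_1 \leq 2\varepsilon$ for every state $\rho$. For any separable $\varrho_{AC} = \sum_j q_j\,\varrho_A^{(j)} \otimes \varrho_C^{(j)}$, the multiplicativity of the trace norm under tensor products then yields
\[ \bigl\|\widehat{\cN}\otimes\mathcal{I\!d}(\varrho_{AC}) - \sigma_B^* \otimes \varrho_C\bigr\|_1 \leq \sum_j q_j\,\bigl\|\widehat{\cN}(\varrho_A^{(j)}) - \sigma_B^*\bigr\|_1 \leq 2\varepsilon, \]
and a rescaling of $\varepsilon$ delivers \eqref{eq:correlations1}.

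The one-way LOCC bound \eqref{eq:correlations2} is the subtler half, and I will obtain it by dualising the operator inequality. Testing it against an arbitrary effect $0 \leq A \leq \Id_\B$ and taking the supremum over input states gives the Hermitian operator bound
\[ \bigl\|\widehat{\cN}^*(A) - \tr(A\sigma_B^*)\,\Id_\A\bigr\|_\infty \leq \varepsilon\bigl(\tr(A\sigma_B^*) + \tr(A)/|\B|\bigr). \]
Summing over any POVM $\{A_i\}$ on $\B$ and using the two normalisation identities $\sum_i \tr(A_i\sigma_B^*) = 1$ and $\sum_i \tr(A_i)/|\B| = 1$, the right-hand side collapses to $2\varepsilon$. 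I then combine this with the standard characterisation $\|X_{BC}\|_{\mathbf{LOCC^\to(\B:\mathrm{C})}} = \sup_{\{A_i\}}\sum_i \|\tr_\B((A_i\otimes\Id)X)\|_1$, the adjunction $\tr_\B((A_i\otimes\Id)(\widehat{\cN}\otimes\mathcal{I\!d})(\varrho_{AC})) = \tr_\A((\widehat{\cN}^*(A_i)\otimes\Id)\varrho_{AC})$, and the elementary estimate $\|\tr_\A((M\otimes\Id)\varrho_{AC})\|_1 \leq \|M\|_\infty$ for Hermitian $M$ (from the variational form of the trace norm), to conclude
\[ \bigl\|(\widehat{\cN}\otimes\mathcal{I\!d})(\varrho_{AC}) - \sigma_B^*\otimes\varrho_C\bigr\|_{\mathbf{LOCC^\to(\B:\mathrm{C})}} \leq 2\varepsilon, \]
and rescaling $\varepsilon$ again yields \eqref{eq:correlations2}.

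The main obstacle is the Step 3 argument: one must verify that the extra $\varepsilon\,\Id/|\B|$ slack appearing in Theorem \ref{th:main}'s approximation, when tested against POVM elements $A_i$ and summed, produces a dimension-independent total $\varepsilon\sum_i\tr(A_i)/|\B| = \varepsilon$ rather than something scaling with $|\B|$. This cancellation is exactly what the specific normalisation $\Id_\B/|\B|$ in \eqref{eq:approxmain} was tailored to provide; a bare $(1{\to}1)$-norm approximation of $\cN_0$ by $\widehat{\cN}$ would not transmit through partial traces in a way suitable for bounding the one-way LOCC norm of $\widehat{\cN}\otimes\mathcal{I\!d}$ on arbitrary (possibly entangled) inputs.
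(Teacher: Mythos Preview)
Your proof is correct and follows essentially the same route as the paper: apply Theorem \ref{th:main} to the constant channel $X\mapsto(\tr X)\sigma_B^*$, handle separable inputs by convexity of the trace-norm bound, and obtain the one-way LOCC bound by dualising the operator inequality to $\widehat{\cN}^*$ and summing over POVM elements so that $\sum_i\tr(A_i\sigma_B^*)=1$ and $\sum_i\tr(A_i)/|\B|=1$. Your packaging of the LOCC norm via $\sup_{\{A_i\}}\sum_i\|\tr_\B((A_i\otimes\Id)X)\|_1$ is equivalent to the paper's explicit measurement computation, and you are in fact slightly more careful than the paper in tracking the extra $\tr(A_i\sigma_B^*)$ term (yielding $2\varepsilon$ before rescaling) and in flagging that the trace-preserving correction at the end of Theorem \ref{th:main}'s proof must preserve an operator inequality of the same shape---which it does here because $\cN_0$ is constant, so $\cN_0(S^{-1/2}\rho S^{-1/2})=\tr(S^{-1}\rho)\,\sigma_B^*$ differs from $\sigma_B^*$ only by a scalar factor in $[1-O(\varepsilon),1+O(\varepsilon)]$.
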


\begin{proof}
Define the completely forgetful CPTP map $\mathcal{N}:X_A\in\mathcal{L}(\A)\mapsto(\tr X_A)\sigma_B^*\in\mathcal{L}(\B)$ (i.e.~$\mathcal{N}$ sends every input state to the output state $\sigma_B^*$). By Theorem \ref{th:main}, there exists a CPTP map $\widehat{\mathcal{N}}:\mathcal{L}(\A)\rightarrow\mathcal{L}(\B)$ with Kraus rank at most $Cd\log(d/\varepsilon)/\varepsilon^2$ such that
\[ \forall\ \varrho_A\in\mathcal{D}(\A),\ \left\|\widehat{\mathcal{N}}(\varrho_A)-\mathcal{N}(\varrho_A)\right\|_1\leq\varepsilon\ \text{i.e.}\ \left\|\widehat{\mathcal{N}}(\varrho_A)-\sigma_B^*\right\|_1\leq\varepsilon. \]
Now, following the exact same route as in the proofs of Lemmas III.1 and III.2 in \cite{HLSW}, we get that this implies precisely Equations \eqref{eq:correlations1} and \eqref{eq:correlations2}, respectively. We will therefore only briefly recall the arguments here.

Concerning Equation \eqref{eq:correlations1}, let $\varrho_{AC}\in\mathcal{S}(\A:\mathrm{C})$, i.e.~$\varrho_{AC}=\sum_xp_x\varrho_A^{(x)}\otimes\varrho_C^{(x)}$. Then,
\begin{align*}
    \left\|\widehat{\mathcal{N}}\otimes\mathcal{I}(\varrho_{AC})-\sigma_B^*\otimes\varrho_C\right\|_1 & = \left\|\sum_x p_x \left(\widehat{\mathcal{N}}\left(\varrho_{A}^{(x)}\right) -\sigma_B^*\right)\otimes\varrho_C^{(x)} \right\|_1 \\
    & \leq \sum_x p_x \left\|\widehat{\mathcal{N}}\left(\varrho_A^{(x)}\right)-\sigma_B^*\right\|_1 \\
    & \leq \varepsilon,
\end{align*}  
where the last inequality is because, by assumption, for each $x$, $\big\|\widehat{\mathcal{N}}\big(\varrho_A^{(x)}\big)-\sigma_B^*\big\|_1 \leq \varepsilon$, and $\sum_x p_x=1$.

As for Equation \eqref{eq:correlations2}, let $\mathrm{M}=\big(M_B^{x}\otimes M_C^{(x)}\big)_x\in\mathbf{LOCC^{\rightarrow}(\B:\mathrm{C})}$, i.e.~for each $x$, $0\leq M_B^{(x)},M_C^{(x)}\leq \Id$, and $\sum_x M_B^{(x)}=\Id$. Then, for any $\varrho_{AC}\in\mathcal{D}(\A\otimes\mathrm{C})$,
\begin{align*} & \left\| \widehat{\mathcal{N}}\otimes\mathcal{I}(\varrho_{AC})-\mathcal{N}\otimes\mathcal{I}(\varrho_{AC})\right\|_{\mathrm{M}} \\
& \qquad =\sum_x \left| \tr\left[ M_B^{(x)}\otimes M_C^{(x)} \left( \widehat{\mathcal{N}}\otimes\mathcal{I}(\varrho_{AC})- \mathcal{N}\otimes\mathcal{I}(\varrho_{AC}) \right) \right] \right| \\
& \qquad =\sum_x \left| \tr\left[ \left(\widehat{\mathcal{N}}^*\left(M_B^{(x)}\right)- \mathcal{N}^*\left(M_B^{(x)}\right)\right)\otimes M_C^{(x)} \varrho_{AC}\right] \right| \\
& \qquad \leq \sum_x \left\| \widehat{\mathcal{N}}^*\left(M_B^{(x)}\right)- \mathcal{N}^*\left(M_B^{(x)}\right) \right\|_{\infty} \\
%\leq & \sum_x \varepsilon \frac{\tr M_B^{(x)}}{|\B|} \\
& \qquad \leq \varepsilon,
\end{align*}
where the next-to-last inequality is because, $\|\varrho_{AC}\|_1\leq 1$ and for each $x$, $\big\|M_C^{(x)}\big\|_{\infty}\leq 1$, while the last inequality is because, by assumption, for each $x$, $\big\| \widehat{\mathcal{N}}^*\big(M_B^{(x)}\big)- \mathcal{N}^*\big(M_B^{(x)}\big) \big\|_{\infty} \leq \varepsilon \tr M_B^{(x)}/|\B|$, and $\sum_x \tr M_B^{(x)}=|\B|$.
%And the proof of Theorem \ref{th:correlations} is thus complete.
\end{proof}

\begin{remark}
Note that the completely forgetful CPTP map
\[ \mathcal{N}:X_A\in\mathcal{L}(\A)\mapsto(\tr X_A)\sigma_B^*\in\mathcal{L}(\B) \] 
has Kraus rank equal to $|\A|\times\mathrm{rank}(\sigma_B^*)$. Hence, the content of Theorem \ref{th:correlations} is interesting only for states $\sigma_B^*$ of sufficiently high rank, namely $\mathrm{rank}(\sigma_B^*)\geq C d\log d/|\A|$.
\end{remark}

In a similar vein, Theorem \ref{th:main'} implies that any bipartite state which is sufficiently mixed can be used for data hiding with a bipartite state of much lower rank. In brief, two bipartite states are called data hiding if there exists a global measurement that allows to distinguish them very well, while they remain poorly distinguishable by all local measurements and classical communication. Most known examples of good data hiding states (such as Werner states or random states) are of high rank. Theorem \ref{th:data-hiding} below shows how to construct, for any `truly' high rank state, an associated low rank state such that the pair is data hiding.

\begin{theorem} \label{th:data-hiding}
Fix $0<\varepsilon<1$ and let $\tau\in\mathcal{D}(\A\otimes\A)$ be such that $\|\tau\|_{\infty} \leq C/|\A|^2$ (where $C>0$ is a universal constant). Then, there exists a state $\hat{\tau}\in\mathcal{D}(\A\otimes\A)$, with rank at most $C'|\A|/\varepsilon^2$ (where $C'>0$ is a universal constant),  satisfying 
\begin{equation} \label{eq:data-hiding}  
\left\| \tau-\hat{\tau} \right\|_{\mathbf{LOCC^{\rightarrow}(\A:\A)}} \leq C\varepsilon \ \text{and}\ \left\| \tau-\hat{\tau} \right\|_1 \geq 2\left( 1 - \frac{C'C}{\varepsilon^2|\A|} \right).
\end{equation}
\end{theorem}

\begin{proof}
Let $\mathcal{N}:\mathcal{L}(\A) \rightarrow \mathcal{L}(\A)$ be the quantum channel whose Choi-Jamio\l{}kowski state is $\tau$. It is easy to check that, since $\|\tau\|_{\infty} \leq C/|\A|^2$, $\mathcal{N}$ is such that
\[ \forall\ \varrho\in\mathcal{D}(\A),\ \|\mathcal{N}(\varrho)\|_{\infty} \leq \frac{C}{|\A|}. \]
Indeed, by definition of $\tau$, for any $\varrho\in\mathcal{D}(\A)$,
\begin{align*}
    \|\mathcal{N}(\varrho)\|_{\infty} & = \max_{\|X\|_1\leq 1} \tr(\mathcal{N}(\varrho)X) \\
    & = |\A| \times \max_{\|X\|_1\leq 1} \tr(\tau\,X\otimes\varrho^T) \\
    & \leq |\A| \times \max_{\|Y\|_1\leq 1} \tr(\tau Y) \\
    & = |\A| \times \|\tau\|_{\infty} \\
    & \leq \frac{C}{|\A|}.
\end{align*} 
Hence, by Theorem \ref{th:main'}, there exists $\widehat{\mathcal{N}}$ with Kraus rank at most $C'|\A|/\varepsilon^2$ such that
\[ \forall\ \varrho\in\mathcal{D}(\A),\ \| \mathcal{N}(\varrho) - \widehat{\mathcal{N}}(\varrho) \|_1 \leq C\varepsilon. \]
Let $\hat{\tau}\in\mathcal{D}(\A\otimes\A)$ be the Choi-Jamio\l{}kowski state associated to $\widehat{\mathcal{N}}$. It has rank equal to the Kraus rank of $\widehat{\mathcal{N}}$, i.e.~at most $C'|\A|/\varepsilon^2$. By the same reasoning as in the proof of Equation \eqref{eq:correlations2} in Theorem \ref{th:correlations}, we have, denoting by $\psi$ the maximally entangled state on $\A\otimes\A$,
\[ \left\| \tau-\hat{\tau} \right\|_{\mathbf{LOCC^{\rightarrow}(\A:\A)}} = \left\| \mathcal{N}\otimes\mathcal{I}(\psi) - \widehat{\mathcal{N}}\otimes\mathcal{I}(\psi) \right\|_{\mathbf{LOCC^{\rightarrow}(\A:\A)}} \leq C\varepsilon. \]
This proves the first inequality in Equation \eqref{eq:data-hiding}. As for the second one, it follows from the fact that $\tau$ has largest eigenvalue at most $C/|\A|^2$, while $\hat{\tau}$ has rank at most $C'|\A|/\varepsilon^2$. Indeed, we therefore have
\[ \left\| \tau-\hat{\tau} \right\|_1 \geq \frac{C'|\A|}{\varepsilon^2}\left( \frac{\varepsilon^2}{C'|\A|} - \frac{C}{|\A|^2} \right) + 1 - \frac{C'C}{\varepsilon^2|\A|} = 2 \left( 1 - \frac{C'C}{\varepsilon^2|\A|} \right), \]
which is exactly the announced result.
\end{proof}

Applying Theorem \ref{th:data-hiding} with $\varepsilon=1/|\A|^{\alpha}$, for some $0<\alpha<1/2$, shows that, for any state $\tau$ on $\A\otimes\A$ such that $\|\tau\|_{\infty} \leq C/|\A|^2$, we can construct a state $\hat{\tau}$ on $\A\otimes\A$ with rank at most $C'|\A|^{1+2\alpha}$ satisfying 
\[ \left\| \tau-\hat{\tau} \right\|_{\mathbf{LOCC^{\rightarrow}(\A:\A)}} \leq \frac{C}{|\A|^{\alpha}} \ \text{and}\ \left\| \tau-\hat{\tau} \right\|_1 \geq 2\left( 1 - \frac{C'}{|\A|^{1-2\alpha}} \right). \]

\subsection{Werner channels.} \hfill\smallskip

An interesting case to which Theorem \ref{th:main'} applies is that of the so-called Werner channels. These are defined as the family of CPTP maps
\[ \mathcal{W}_{\lambda}:X\in\mathcal{L}(\A)\mapsto\frac{1}{|\A|+2\lambda-1}\left[(\tr X)\Id +(2\lambda-1)X^T\right]\in\mathcal{L}(\A),\ 0\leq\lambda\leq 1. \]
Denoting by $\varsigma$ and $\alpha$ the symmetric and anti-symmetric states on $\A\otimes\A$, it is easy to check that, for each $0\leq\lambda\leq 1$, the Choi-Jamio\l{}kowski state $\tau(\mathcal{W}_{\lambda})$ associated to $\mathcal{W}_{\lambda}$ is nothing else than the Werner state $\rho_{\lambda}=\lambda\varsigma+(1-\lambda)\alpha$. Hence, $\mathcal{W}_{\lambda}$ has Kraus rank $|\A|^2$ whenever $0<\lambda<1$, and $|\A|(|\A|+1)/2$, resp. $|\A|(|\A|-1)/2$, when $\lambda=1$, resp. $\lambda=0$, i.e.~in any case full or almost full Kraus rank. These channels are thus typically of the kind that we would like to compress into more economical ones. What is more, they have the property of having only very mixed output states. Indeed,
\[ \max_{\varrho\in\mathcal{D}(\A)}\left\|\mathcal{W}_{\lambda}(\varrho)\right\|_{\infty} = \begin{cases} 2\lambda/(|\A|+2\lambda-1)\ \text{if}\ \lambda\geq 1/2\\ 1/(|\A|+2\lambda-1)\ \text{if}\ \lambda< 1/2 \end{cases} \leq \frac{2}{|\A|}. \]
So by Theorem \ref{th:main'}, we get that, for each $0\leq\lambda\leq 1$, given $0<\varepsilon<1$, there exists a CP map $\widehat{\mathcal{W}}_{\lambda}:\mathcal{L}(\A)\rightarrow\mathcal{L}(\A)$ with Kraus rank at most $C|\A|/\varepsilon^2$ (where $C>0$ is a universal constant) such that
\[ \forall\ \varrho\in\mathcal{D}(\A),\ \left\| \widehat{\mathcal{W}}_{\lambda}(\varrho)-\mathcal{W}_{\lambda}(\varrho)\right\|_{\infty} \leq \frac{\varepsilon}{|\A|}. \]
In words, this means that the Werner CPTP maps can be $(\varepsilon/|\A|)$-approximated in $(1{\rightarrow}\infty)$-norm distance (hence in particular $\varepsilon$-approximated in $(1{\rightarrow}1)$-norm distance) by CP maps having Kraus rank $C|\A|/\varepsilon^2\ll |\A|^2$.

\begin{remark}
Note that, as a special case of this approximation result for Werner channels, we recover the well-known approximation result for the fully randomizing channel $\mathcal{R}$, recalled in Section \ref{sec:background}. Indeed, $\mathcal{W}_{1/2}$ is nothing else than $\mathcal{R}$.
\end{remark}

\section{Discussion}
\label{sec:discussion}

We have generalized in several senses the result established in \cite{HLSW} and \cite{Aubrun}. First, we have shown that it holds for all quantum channels and not only for the fully randomizing one: any CPTP map from $\mathcal{L}(\mathrm{A})$ to $\mathcal{L}(\mathrm{B})$ can be $\varepsilon$-approximated in $(1{\rightarrow}1)$-norm distance by a CPTP map with Kraus rank of order $d\log(d/\varepsilon)/\varepsilon^2$, where $d=\max(|\A|,|\B|)$. Second, we have established that a stronger notion of approximation can actually be proven, namely an $\varepsilon$-ordering of the two CP maps, which allows to derive approximation results in terms of various output quantities (that are tighter than those induced by the rougher norm distance closeness). In the case where the channel under consideration is, as the fully randomizing channel, very noisy (meaning that all output states are very mixed), the extra $\log(d/\varepsilon)$ factor in our result can be removed. However, we do not know if this is true in general. On a related note, our study of optimality shows that there exist channels which cannot be compressed below order $d$ Kraus operators (even to achieve the weakest notions of approximation). But what about channel-dependent upper and lower bounds? For a given channel, would there be a more clever construction than ours (i.e.~a non-universal one) that would enable its compression to a number of Kraus operators whose $\log$ would be, for instance, of the order of its maximum input-output entropy difference?

\smallskip
Furthermore, full or partial derandomization of our construction would be desirable in practice. Here again the main difficulty is that most of the techniques which apply to very noisy channels may fail in general. Let us specify a bit what we mean. In \cite{Aubrun}, two approximation schemes were proposed for the fully randomizing channel $\mathcal{R}:\cL(\C^d)\rightarrow\cL(\C^d)$. They consisted in taking as Kraus operators $\{U_i/\sqrt{n},\ 1\leq i\leq n\}$ with $U_1,\ldots,U_n$ sampled either from the Haar measure on $\cU(\C^d)$ or from any other isotropic (aka unitary $1$-design) measure on $\cU(\C^d)$. It was then shown that, in order to approximate $\mathcal{R}$ up to error $\varepsilon/d$ in $(1{\rightarrow}\infty)$-norm, $n$ of order $d/\varepsilon^2$ was enough in the Haar-distributed case and $n$ of order $d\log^6d/\varepsilon^2$ was enough in the, more general, isotropically-distributed case. The advantage of the second result compared to the first one is that there exist isotropic measures which are much simpler than the Haar measure on $\cU(\C^d)$, in particular discrete ones (e.g.~the uniform measure over any unitary orthogonal basis of $\cL(\C^d)$). Hence, from a practical point of view, generating such a measure is arguably more realistic than generating the Haar measure (the reader is e.g.~referred to \cite{BHH} for a more precise formulation of the claim that implementing a Haar distributed unitary is hard and an extensive discussion on how to approximate such a unitary by a more easily implementable one). Now, if $\cN:\cL(\C^d)\rightarrow\cL(\C^d)$ is a channel, with environment $\C^s$, such that $\sup_{\rho\in\cD(\C^d)}\|\cN(\rho)\|_{\infty} \leq C/d$, then arguments of the same type apply to our construction: to approximate $\mathcal{N}$ up to error $\varepsilon/d$ in $(1{\rightarrow}\infty)$-norm by sampling unit vectors in $\C^s$, order $d/\varepsilon^2$ of them is enough if they are Haar-distributed (which is the content of Theorem \ref{th:main}) and order $d\log^6d/\varepsilon^2$ of them is enough if they are only assumed to be isotropically-distributed. Here as well, the gain in terms of needed amount of randomness is obvious: there exist isotropic measures which are much simpler to sample from than the Haar-measure on $S_{\C^s}$ (e.g.~the uniform measure on any orthonormal basis of $\C^s$). Unfortunately, this whole reasoning (based on Dudley's upper bounding of Bernoulli averages by covering number integrals and on a sharp entropy estimate for the suprema of empirical processes in Banach spaces) fails completely for channels that have some of their outputs which are too pure.

\smallskip
Finally, one could ask whether other parameters than the Kraus rank would be relevant ones to try and minimize. As we have already explained, the question we investigate can be seen as a data compression problem: how can we approximate a given protocol while reducing the amount of resources needed? Taking the Kraus rank, an integer-valued non-smooth quantity, as figure of merit in this channel compression task is somehow putting ourselves in a one-shot scenario. But one could wonder how to define an asymptotic version of this question, with a corresponding regularized version of the Kraus rank.

\section*{Acknowledgements}
This work owes a lot to Guillaume Aubrun. He is actually the one who first raised the question of compressing a general quantum channel in terms of its Kraus operators, following the papers \cite{HLSW} and \cite{Aubrun} which were dealing with the case of the fully randomizing channel. He then followed all the steps of the project, including the most off-putting ones, such as proof-reading technical points of the paper. 
We would also like to thank an anonymous referee for their incredibly constructive comments, which have helped us improve the paper considerably. In particular, the alternative way of viewing our construction, in terms of Kraus operators rather than Stinespring dilation, that we discuss in Remark \ref{rem:Kraus}, is due to them.

This research was supported by the European Research Council (grants no.~337603 and no.~648913), the Spanish MINECO (projects FIS2013-40627-P,
FIS2016-86681-P and PID2019-107609GB-I00) with the support of FEDER funds, the Generalitat de Catalunya (CIRIT projects 
2014-SGR-966 and 2017-SGR-1127), the John Templeton Foundation (grant no.~48322), the French ANR (projects Stoq 14-CE25-0033 and ANR-11-LABX-0040).

\addcontentsline{toc}{section}{References}

\begin{comment}

\bibliographystyle{quantum}

\end{comment}

\end{document}